\renewcommand\section{\@startsection{section}{1}{\z@}%
                       {-12\p@ \@plus -2\p@ \@minus -2\p@}%
                       {6\p@ \@plus 2\p@ \@minus 2\p@}%
                       {\normalfont\large\bfseries\boldmath
                        \rightskip=\z@ \@plus 8em\pretolerance=10000 }}
\DeclareMathAlphabet{\mathpzc}{OT1}{pzc}{m}{it}
\newcommand{\ie}{{i.e.}\xspace}
\newcommand{\etc}{{etc.}\xspace}
\newcommand{\lemref}[1]{Lemma~\ref{lem:#1}}
\newcommand{\defref}[1]{Definition~\ref{def:#1}}
\newcommand{\secref}[1]{Sec.~\ref{sec:#1}}
\newcommand{\exref}[1]{Example~\ref{ex:#1}}
\newcommand{\thmref}[1]{Theorem~\ref{thm:#1}}
\newcommand{\propref}[1]{Proposition~\ref{prop:#1}}
\newcommand{\ignore}[1]{}
\newcommand{\s}{s}   
\newcommand{\str}{t}
\newcommand{\sStr}{u}
\newcommand{\hd}{d_{M}}
\newcommand{\diff}{\mathtt{diff}}
\newcommand{\inp}{w}   
\newcommand{\inpt}{v}   
\newcommand{\out}{w'} 
\newcommand{\outt}{v'} 
\newcommand{\Nat}{\mathbb{N}}
\newcommand{\funcDefinedBy}[1]{\ensuremath{\llbracket #1\rrbracket}}
\newcommand{\eos}{\texttt{\#}}
\newcommand{\eps}{\epsilon}
\newcommand{\tran}{\ensuremath{\mathpzc{T}}}
\newcommand{\trans}{\ensuremath{\mathpzc{T}}}
\newcommand{\ptran}{\ensuremath{\mathpzc{P}}_\tran}
\newcommand{\Pair}{\mathpzc{P}}
\newcommand{\es}{\eps}
\tikzstyle{smalltext}=[font=\fontsize{7}{8}\selectfont]
\tikzstyle{captiontext}=[font=\fontsize{8}{8}\selectfont]
\tikzstyle{state}=[draw, ellipse, minimum height=5mm,
\newcommand{\lnorm}{Manhattan distance\xspace}
\newcommand{\fst}{{\sc fst}\xspace}
\newcommand{\wa}{{\sc wa}\xspace}
\newcommand{\conp}{\textsc{co-NP}}
\newcommand{\np}{\textsc{NP}}
\newcommand{\lpair}[2]{\langle {#1}, {#2} \rangle}
\newcommand{\stutter}{\textsc{Stutter}}
\newcommand{\alphI}{\Sigma}
\newcommand{\alphO}{\Gamma}
\newcommand{\baut}{\bar{\aut}}
\newcommand{\blaut}{\baut^L}
\newcommand{\braut}{\baut^R}
\newcommand{\dI}{d_{\alphI}}
\newcommand{\dO}{d_{\alphO}}
\newcommand{\dsetO}{D_{\alphO}}
\newcommand{\PCPinst}{{\cal G}}
\newcommand{\diffI}{\diff_{\alphI}}
\newcommand{\diffO}{\diff_{\alphO}}
\newcommand{\dom}[1]{\mathrm{dom}(#1)}
\newcommand{\N}{{\cal N}}
\newcommand{\aut}{{\cal A}}
\newcommand{\Q}{{\mathbb{Q}}}
\newcommand{\cost}{{c}}
\newcommand{\lang}{{\cal L}}
\newcommand{\Lfin}{{\cal L}_{\mathrm{fin}}}
\newcommand{\Acc}{\mathsf{Acc}}
\newcommand{\const}{\lambda}
\newcommand{\fsum}{\textsc{Sum}}
\newcommand{\fraction}[1]{\underline{#1}}
\newcommand{\fdiscF}{\textsc{Disc}_{\lambda}^{\textrm{fin}}}
\newcommand{\fdiscI}{\textsc{Disc}_{\lambda}^{\textrm{inf}}}
\newcommand{\fdisc}{\textsc{Disc}_{\delta}}
\newcommand{\flimavg}{\textsc{LimAvg}}
\newcommand{\valFun}{\textsc{ValFunc}}
\newcommand{\valueL}[1]{{\cal L}_{{#1}}}
\newcommand{\lett}[1]{\underline{#1}}
\newcommand{\Qinf}{\Q^{\infty}}
\title{Lipschitz Robustness of Finite-state Transducers}
\author{Thomas A. Henzinger, Jan Otop, \and Roopsha Samanta}
\institute{IST Austria \\
\email{\{tah,jotop,rsamanta\}@ist.ac.at}}
\begin{document}


\maketitle

\begin{abstract}
We investigate the problem of checking if a finite-state transducer is {\em
robust} to uncertainty in its input. Our notion of robustness is based on the
analytic notion of Lipschitz continuity --- a transducer is $K$-(Lipschitz) 
robust if the perturbation in its output is at most $K$
times the perturbation in its input.  We quantify input and output perturbation
using {\em similarity functions}.  We show that $K$-robustness is undecidable
even for deterministic transducers.  We identify a class of functional
transducers, which admits a polynomial time 
automata-theoretic decision procedure for 
$K$-robustness. This class includes Mealy machines and functional
letter-to-letter transducers.  We also study $K$-robustness of nondeterministic
transducers.  Since a nondeterministic transducer generates a set of output
words for each input word, we quantify output perturbation using {\em
set-similarity functions}.  We show that $K$-robustness of nondeterministic
transducers is undecidable, even for letter-to-letter transducers. 
We identify a class of set-similarity functions which admit decidable 
$K$-robustness of letter-to-letter transducers.

\end{abstract}

\section{Introduction}\label{sec:intro}
Most computational systems today are embedded in a physical environment. The
data processed by such real-world computational systems is often noisy or
uncertain.  For instance, the data generated by sensors in reactive systems such 
as avionics software may be corrupted, keywords processed by text processors may be wrongly spelt,
the DNA strings processed in computational biology may be incorrectly sequenced, 
and so on. In the presence of such input uncertainty, it is not enough for a
computational system to be functionally correct. An additional desirable
property is that of {\em continuity} or {\em robustness} --- the system behaviour 
degrades smoothly in the presence of input disturbances \cite{Henzinger08}.

Well-established areas within control theory, such as {\em robust control}
\cite{ZDG96}, extensively study robustness of systems. However, their results
typically involve reasoning about continuous state-spaces and are not directly
applicable to inherently discontinuous discrete computational systems.
Moreover, uncertainty in robust control refers to differences between
a system's model and the actual system; thus robust control focuses
on designing controllers that function properly in the presence of perturbation
in various internal parameters of a system's model. Given the above, formal reasoning
about robustness of computational systems under input uncertainty is a problem
of practical as well as conceptual importance.  

In our work, we focus on robustness of finite-state transducers, processing 
finite or infinite words, in the presence
of uncertain inputs. Transducers are popular models of input-output
computational systems operating in the real world
\cite{gusfield_algorithms_1997,mohri1997finite,BradleyH07,VHLMB12}.  While many
decision problems about transducers have been studied thoroughly over the
decades \cite{mohri1997finite,VHLMB12}, their behaviour under uncertain inputs
has only been considered recently \cite{foundation}. In \cite{foundation}, a
transducer was defined to be robust if its output changed proportionally to
every change in the input {\em upto a certain threshold}. In practice, it may
not always be possible to determine such a bound on the input perturbation.
Moreover, the scope of the work in \cite{foundation} was limited to the
robustness problem for {\em functional} transducers w.r.t. specific
distance functions, and did not consider arbitrary nondeterministic transducers
or arbitrary {\em similarity functions}.

In this paper, we formalize robustness of finite-state transducers as Lipschitz
continuity. A function is Lipschitz-continuous if its output changes
proportionally to {\em every} change in the input.  Given a constant $K$ and
similarity functions $\dI$, $\dO$ for computing the input, output
perturbation, respectively, a functional transducer $\tran$ is defined to be
$K$-Lipschitz robust (or simply, $K$-robust) w.r.t. $\dI$, $\dO$ if for all
words $\s,\str$ in the domain of $\tran$ with finite $\dI(\s,\str)$, 
$\dO(\tran(\s),\tran(\str)) \le K \dI(\s,\str)$.  Let us consider the
transducers $\tran_{NR}$ and $\tran_R$ below.  Recall that the Hamming distance
between equal length words is the number of positions in which the words
differ.  Let $\dI$, $\dO$ be computed as the Hamming distance for equal-length
words, and be $\infty$ otherwise. Notice that for words $a^{k+1}$, $ba^{k}$ in
the domain of the Mealy machine $\tran_{NR}$, 
$\dI(a^{k+1}, ba^{k}) = 1$ and the distance between the corresponding 
output words, $\dO(a^{k+1}, b^{k+1})$, equals $k+1$. 
Thus, $\tran_{NR}$ is not $K$-robust for any $K$. On the other hand, the
transducer $\tran_R$ is $1$-robust: for words $a^{k+1}$, $ba^{k}$, we have 
$\dI(a^{k+1}, ba^{k}) = \dO((b)^{k+1},a(b)^k) = 1$, 
and for all other words $\s,\str$ in the domain of
$\tran_R$, either $\dI(\s,\str) = \infty$ or $\dI(\s,\str) =
\dO(\tran_R(\s),\tran_R(\str)) = 0$.


\begin{center}
\begin{minipage}{0.45\linewidth}
\begin{tikzpicture}[->,node distance=2cm,semithick]
  \tikzstyle{every state}=[circle,text=black]

  \node[state,accepting]   (q0)                     {$q_0$};
  \node[state,accepting] (q1) [below left of=q0]  {$q_1$};
  \node[state,accepting] (q2) [below right of=q0] {$q_2$};

  \node[state,draw=none] (name) [left of=q0,node distance=2.3cm]  {\normalsize{$\tran_{NR}$:}};

  \node[coordinate] (c1) [above left of=q0,node distance=0.5cm,xshift=-3mm] {};

  \path (c1) edge (q0)
        (q0) edge [above left] node {$a/a$} (q1)
        (q0) edge [above right] node {$b/b$} (q2)
        (q1) edge [loop below] node {$a/a$} (q1)
        (q2) edge [loop below] node {$a/b$} (q2);
\end{tikzpicture}
\end{minipage}
\begin{minipage}{0.45\linewidth}
\centering
\begin{tikzpicture}[->,node distance=2cm,semithick]
  \tikzstyle{every state}=[circle,fill=grey,text=black]

  \node[state,accepting]   (q0)                     {$q_0$};
  \node[state,accepting] (q1) [below left of=q0]  {$q_1$};
  \node[state,accepting] (q2) [below right of=q0] {$q_2$};
  \node[state,draw=none] (name) [left of=q0,node distance=2.3cm]  {\normalsize{$\tran_{R}$:}};
  \node[coordinate] (c1) [above left of=q0,node distance=0.5cm,xshift=-3mm] {};

  \path (c1) edge (q0)
        (q0) edge [above left] node {$a/b$} (q1)
        (q0) edge [above right] node {$b/a$} (q2)
        (q1) edge [loop below] node {$a/b$} (q1)
        (q2) edge [loop below] node {$a/b$} (q2);
\end{tikzpicture}
\end{minipage}
\end{center}

While the $K$-robustness problem is undecidable even for deterministic
transducers, we identify interesting classes of finite-state transducers with  
decidable $K$-robustness. We first define a class of functional transducers, 
called synchronized transducers, which admits a polynomial time 
decision procedure for $K$-robustness. 
This class includes Mealy
machines and functional letter-to-letter transducers; membership of a
functional transducer in this class is decidable in polynomial time.  Given
similarity functions computable by weighted automata, we reduce the
$K$-robustness problem for synchronized transducers to the emptiness problem
for weighted automata. 

We  extend our decidability results by employing an \emph{isometry approach}.
An \emph{isometry} is a transducer, which for all words $\s,\str$ satisfies
$\dO(\trans(\s),\trans(\str)) = \dI(\s,\str)$. 
We observe that if a transducer $\trans_2$ can be obtained from a transducer $\trans_1$ by applying isometries to the
input and output of $\trans_1$, then $K$-robustness of $\trans_1$ and $\trans_2$ coincide.
This observation enables us to reduce  $K$-robustness of various 
transducers to that of synchronized transducers. 

Finally, we study $K$-robustness of nondeterministic transducers.  Since a
nondeterministic transducer generates a set of output words for each input
word, we quantify output perturbation using {\em set-similarity functions} and
define $K$-robustness of nondeterministic transducers w.r.t. such
set-similarity functions. We show that $K$-robustness of nondeterministic
transducers is undecidable, even for letter-to-letter transducers.  We define
three classes of set-similarity functions and show decidability of
$K$-robustness of nondeterministic letter-to-letter transducers w.r.t. one
class of set-similarity functions. 


The paper is organized as follows. We begin by presenting necessary definitions
in \secref{prelim}.  We formalize Lipschitz robustness in \secref{probdef}. In
\secref{regular} and \secref{functional}, we study the $K$-robustness problem
for functional transducers, showing undecidability of the general problem and
presenting two classes with decidable $K$-robustness. We study $K$-robustness
of arbitrary nondeterministic transducers in \secref{nondeter}, present 
a discussion of related work in \secref{related} and conclude in \secref{conclusion}.

\section{Preliminaries}\label{sec:prelim}
\newcommand{\alphIi}{{\Lambda}}
\newcommand{\alphOo}{{\Delta}}
\newcommand{\dIi}{d_{\alphIi}}
\newcommand{\dOo}{d_{\alphOo}}
\newcommand{\run}{\gamma}

In this section, we review definitions of finite-state transducers and weighted
automata, and present similarity functions. We use the following notation.  We
denote input letters by $a$, $b$ \etc, input words by $\s$, $\str$ \etc, output
letters by $a'$, $b'$ \etc and output words by $\s'$, $\str'$ \etc We denote
the concatenation of words $\s$ and $\str$ by $\s \cdot \str$, the $i^{th}$
letter of word $\s$ by $\s[i]$, the subword $\s[i] \cdot \s[i+1] \cdot \ldots
\cdot \s[j]$ by $\s[i,j]$, the length of the word $\s$ by $|\s|$, and the empty
word and empty letter by $\es$. Note that for an $\omega$-word $\s$, $|\s| = \infty$.
\medskip 

\noindent {\bf Finite-state Transducers.}\label{subsec:fst} A finite-state
transducer (\fst) $\tran$ is given by a tuple $(\alphI,\alphO,Q,Q_0,E,F)$ where
$\alphI$ is the input alphabet, $\alphO$ is the output alphabet, $Q$ is a
finite nonempty set of states, $Q_0 \subseteq Q$ is a set of initial states, $E
\subseteq Q \times \alphI \times \alphO^* \times Q$ is a set of
transitions\footnote{Note that we disallow $\es$-transitions where the
transducer can change state without moving the reading head.}, and $F$ is a 
set of accepting states.

A run $\run$ of $\tran$ on an input word $\s = \s[1]\s[2]\ldots$ is defined in
terms of the sequence: $(q_0,\out_1)$, $(q_1,\out_2)$, $\ldots$ where $q_0 \in
Q_0$ and for each $i \in \{1,2,\ldots\}$, $(q_{i-1},\s[i],\out_i,q_{i}) \in E$. 
Let $\text{Inf}(\run)$ denote the set of states that appear infinitely often along $\run$.
For an \fst $\tran$
processing $\omega$-words, a run is accepting if $\text{Inf}(\run)
\cap F \neq \emptyset$ (B{\"u}chi acceptance condition). 
For an \fst $\tran$ processing finite words, a run
$\run$: $(q_0,\out_1)$, $\ldots$ $(q_{n-1},\out_n)$, $(q_n,\es)$ on 
input word $\s[1]\s[2]\ldots\s[n]$ is accepting if $q_n \in F$ (final state acceptance 
condition). 
The output of $\tran$ along a run is the word 
$\out_1 \cdot \out_2 \cdot \ldots$ 
if the run is accepting, and is undefined otherwise. The transduction computed
by an \fst $\tran$ processing infinite words (resp., finite words) 
is the relation $\funcDefinedBy{\tran} \subseteq \alphI^\omega 
\times \alphO^\omega$ (resp., $\funcDefinedBy{\tran} \subseteq \alphI^*  
\times \alphO^*$), where $(\s,\s') \in \funcDefinedBy{\tran}$ iff there is an
accepting run of $\tran$ on $\s$ with $\s'$ as the output along that run. With
some abuse of notation, we denote by $\funcDefinedBy{\tran}(\s)$ the set
$\{\str: (\s,\str) \in \funcDefinedBy{\tran}\}$. The input language,
$\dom{\tran}$, of $\tran$ is the set $\{\s:
\funcDefinedBy{\tran}(\s)\ \text{is non-empty}\}$.

An \fst $\tran$ is called {\em functional} if the
relation $\funcDefinedBy{\tran}$ is a function. In this case, we use
$\funcDefinedBy{\tran}(\s)$ to denote the unique output word generated  along
any accepting run of $\tran$ on input word $\s$.  Checking if an arbitrary \fst
is functional can be done in polynomial time \cite{gurari_note_1983}. An 
\fst $\tran$ is {\em deterministic} if for each $q \in Q$ and each $a \in \alphI$, 
$|\{q': (q,a,\out,q') \in E\}| \leq 1$. An \fst
$\tran$ is a {\em letter-to-letter} transducer if for every transition of the
form $(q,a,\out,q') \in E$, $|\out| = 1$. A {\em Mealy machine} is 
a deterministic, letter-to-letter transducer, with every state being an accepting state. In what follows, we use transducers and finite-state transducers interchangeably.

\noindent{\em Composition of transducers.} Consider transducers $\trans_1 =
(\alphI, \alphOo, Q_1, Q_{1,0}, E_1, F_1)$ and  $\trans_2 = (\alphOo,
\alphO, Q_2, Q_{2,0}, E_2, F_2)$ such that for every $\s \in
\dom{\trans_1}$, $\funcDefinedBy{\trans}(\s) \in \dom{\trans_2}$.  We define
$\trans_2 \circ \trans_1$, the \emph{composition} of $\trans_1$ and $\trans_2$,
as the transducer $(\alphI,\alphO,Q_1 \times Q_2, Q_{1,0} \times
Q_{2,0},E,F_1 \times F_2)$, where $E$ is defined as:
$(\lpair{q_1}{q_2}, a, \out, \lpair{q_1'}{q_2'})$ $\in$ $E$ iff $(q_1, a,
\str', q_1') \in E_1$ and upon reading $\str'$, $\trans_2$ generates $\out$ and 
changes state from $q_2$ to $q_2'$, \ie, iff $(q_1, a,
\str', q_1') \in E_1$ and there exist
$(q_2,\str'[1],\out_1,q_2^{1})$, $(q_2^1,\str'[2],\out_2,q_2^{2})$, $\ldots$, 
$(q_2^{k-1}, \str'[k],\out_k,q_2') \in E_2$ such that $k = |\str'|$ and 
$\out = \out_1 \cdot \out_2 \cdot \ldots \cdot \out_k$.  Observe that if $\trans_1, \trans_2$ are functional,
$\trans_2 \circ \trans_1$ is functional and $\funcDefinedBy{\trans_2
\circ \trans_1} = \funcDefinedBy{\trans_2} \circledcirc \funcDefinedBy{\trans_1}$, where  
$\circledcirc$ denotes function composition.
\medskip

%
%
\noindent{\bf Weighted automata.} Recall that a finite automaton (with B{\"u}chi or 
final state acceptance) can be expressed as a 
tuple $(\alphI, Q, Q_0, E, F)$, where $\alphI$ is the alphabet, $Q$ is a finite
set of states, $Q_0 \subseteq Q$ is a set of initial states, $E \subseteq Q
\times \alphI \times Q$ is a transition relation, and $F \subseteq Q$ is a set
of accepting states.  A weighted automaton (\wa) is a finite automaton whose transitions
are labeled by rational numbers.  Formally, a \wa $\aut$ is a tuple $(\alphI,
Q, Q_0, E, F, \cost)$ such that $(\alphI, Q, Q_0, E, F)$ is a finite automaton
and $\cost : E \mapsto \Q$ is a function labeling the transitions of $\aut$. 
The transition labels are called \emph{weights}.

Recall that a run $\pi$ of a finite automaton on a word $\s =
\s[1]\s[2]\ldots$ is defined as a sequence of states: $q_0, q_1, \ldots$ 
where $q_0 \in Q_0$ and for each  $i \in \{1,2,\ldots\}$, $(q_{i-1}, s[i],
q_{i}) \in E$.  A run $\pi$ in a finite automaton processing 
$\omega$-words (resp., finite words) is accepting if it satisfies 
the B{\"u}chi (resp., final state) acceptance condition. 
The set of accepting
runs of an automaton on a word $\s$ is denoted $\Acc(\s)$. Given a word $\s$,
every run $\pi$ of a \wa $\aut$ on $\s$ defines a sequence
$\cost(\pi)=(\cost(q_{i-1}, \s[i], q_{i}))_{1\leq i \leq |\s|}$ of weights of
successive transitions of $\aut$; such a sequence is also referred to as a
weighted run.  To define the semantics of weighted automata we need to define
the value of a run (that combines the sequence of weights of the run into a
single value) and the value across runs (that combines values of different runs
into a single value).  To define values of runs, we consider \emph{value
functions} $f$ that assign real numbers to sequences of rational numbers, and
refer to a \wa with a particular value function $f$ as an $f$-\wa.  Thus, the
value $f(\pi)$ of a run $\pi$ of an $f$-\wa $\aut$ on a word $\s$ equals $f(\cost(\pi))$.
The value of a word $\s$ assigned by an $f$-\wa $\aut$, denoted
$\valueL{\aut}(\s)$, is the {\em infimum} of the set of values of all accepting
runs, i.e., $\valueL{\aut}(\s) = \inf_{\pi \in \Acc(\s)} f(\pi)$ (the
infimum of an empty set is infinite).  

In this paper, we consider the following value
functions: (1)~the sum function $\fsum(\pi) = \sum_{i=1}^{|\pi|}
(\cost(\pi))[i]$, (2)~the discounted sum function $\fdisc(\pi) =
\sum_{i=1}^{|\pi|} \delta^i (\cost(\pi))[i]$ with $\delta \in (0,1)$ 
and (3)~the limit-average
function $\flimavg(\pi) = \limsup_{k\rightarrow \infty} \frac{1}{k}
\sum_{i=1}^{k} (\cost(\pi))[i]$. Note that the limit-average value function cannot be 
used with finite sequences. We define $\valFun = \{ \fsum, \fdisc, \flimavg\}$.

A \wa $\aut$ is \emph{functional} iff for every word $\s$,
all accepting runs of $\aut$ on $\s$ have the same value. 

%
%

\noindent{\em Decision questions. }Given an $f$-\wa $\aut$ and a threshold $\const$, the \emph{emptiness} question
asks whether {\em there exists} a word $\s$ such that  $\valueL{\aut}(\s) <
\const$ and the \emph{universality} question asks whether {\em for all} words
$\s$ we have $\valueL{\aut}(\s) < \const$.  The following results are known.

\begin{lemma}
(1)~For every $f \in \valFun$,
the emptiness problem is decidable 
in polynomial time for nondeterministic $f$-automata
~\cite{Filar:1996:CMD:248676,Droste:2009:HWA:1667106}.
(2)~The universality problem is undecidable for $\fsum$-automata with 
weights drawn from $\{ -1, 0, 1\}$~\cite{DBLP:journals/ijac/Krob94,DBLP:conf/atva/AlmagorBK11}.
\label{lem:oldResults}
\end{lemma}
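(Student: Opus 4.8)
Both statements are established in the cited literature, so my plan is to reconstruct the reductions that justify them. For part~(1), the first step I would take is to observe that, because the value of a word under an $f$-\wa is the \emph{infimum} over accepting runs, the emptiness question collapses to a purely graph-theoretic optimization. Concretely, $\valueL{\aut}(\s) < \const$ for some word $\s$ iff there is an accepting run $\pi$ in the transition graph of $\aut$ with $f(\cost(\pi)) < \const$: if some word's value (an infimum) is below $\const$, then by definition of infimum some run on that word already lies below $\const$; conversely any cheap run witnesses a word whose value is at most that of the run. A nondeterministic automaton is thus free to choose word and run together, and I would simply discard the input letters and reason about minimum-$f$-value accepting paths in the weighted graph $(Q,E,\cost)$.

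The three value functions then reduce to classical polynomial-time problems. For $\fsum$ over finite words, minimizing the total weight of an accepting path is a shortest-path computation; since weights may be negative I would use Bellman--Ford, simultaneously detecting whether a negative-weight cycle lies on some initial-to-accepting path, in which case the infimum is $-\infty$ and emptiness holds for every $\const$. For $\fdisc$ with $\delta \in (0,1)$ the discounted sum converges, and minimizing it over runs is a one-player discounted optimization (a discounted Markov decision process), solvable in polynomial time by linear programming or value iteration. For $\flimavg$, the value of an infinite run is governed by the cycles traversed infinitely often, so minimizing it amounts to finding a minimum-mean reachable cycle that also meets the B{\"u}chi condition; Karp's minimum-mean-cycle algorithm, restricted to the accepting part of the graph, yields a polynomial-time procedure. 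These are precisely the algorithms underlying the cited references.

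For part~(2), I would reduce from an undecidable problem---for instance the halting problem of two-counter (Minsky) machines, which is essentially the content of Krob's undecidability results for min-plus (tropical) automata. The idea is to encode candidate computations of the machine as words and to build an $\fsum$-automaton, using only weights in $\{-1,0,1\}$, whose runs \emph{audit} an encoding: a run nondeterministically guesses a position witnessing a local violation of the transition/counter-consistency rules, and is engineered to accumulate total weight below $\const$ exactly when it points to a genuine violation. Since a word's value is the infimum over runs, a word then has value below $\const$ iff it is \emph{not} a faithful encoding of a halting computation. Hence universality---every word below $\const$---holds iff no word encodes a halting computation, i.e.\ iff the machine does not halt; as halting is undecidable, so is universality. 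Tracking counter values by $\pm 1$ increments keeps the weight alphabet within $\{-1,0,1\}$.

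The main obstacle, and the real crux of the cited result, lies entirely in part~(2): designing the $\{-1,0,1\}$-weighted gadget so that (i)~every genuine violation forces the value above the threshold on \emph{every} run, yet (ii)~a faithful encoding admits a run staying below it, all while keeping the automaton nondeterministic and the weight set minimal. By contrast, part~(1) is routine once the reduction to weighted-graph optimization is made explicit, so I would spend essentially no effort there beyond citing the standard shortest-path, discounted-MDP, and minimum-mean-cycle algorithms.
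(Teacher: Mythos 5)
First, note what you are comparing against: the paper does not prove this lemma at all --- it is stated as a collection of known results and discharged entirely by citations (emptiness from the cited algorithmic literature, universality-undecidability from Krob and from Almagor--Boker--Kupferman). Your reconstruction follows essentially the same routes as those cited works: reducing emptiness to optimal-path/optimal-cycle problems on the weighted transition graph (Bellman--Ford with negative-cycle detection for $\fsum$, discounted one-player optimization for $\fdisc$, minimum-mean-cycle for $\flimavg$), and reducing halting of two-counter machines to universality via a ``violation-auditing'' $\fsum$-automaton with weights in $\{-1,0,1\}$, which is exactly the Almagor--Boker--Kupferman argument. So at the level of approach there is nothing to object to, and part (2) is, as you say, a plan rather than a worked construction --- acceptable here, since the paper itself only cites it.

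There are, however, two concrete slips. First, in the $\flimavg$ case your criterion ``minimum-mean reachable cycle that also meets the B{\"u}chi condition'' is wrong as literally stated: the optimal accepting run need not repeat a cycle through an accepting state. The correct quantity is the minimum cycle mean over all cycles lying in a reachable strongly connected subgraph that \emph{contains} an accepting state; the witnessing run repeats that cycle and makes increasingly rare excursions to the accepting state, so its limsup-average equals the cycle mean. Requiring the cycle itself to visit $F$ can strictly overestimate the infimum (picture an SCC with a mean-$0$ cycle avoiding $F$ and only expensive cycles through $F$) and hence answer emptiness incorrectly. Second, your closing paragraph states the part-(2) gadget requirements backwards: you ask that genuine violations force the value \emph{above} the threshold on every run while faithful encodings admit a run \emph{below} it. That is the opposite of the (correct) reduction you describe two paragraphs earlier, and the backwards version cannot work: since most words encode nothing at all, universality would then fail for every machine, halting or not, and nothing would be reduced. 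The requirements must be: a word that fails to encode a halting computation has some cheap (below-threshold) run exposing one of its violations, while a faithful encoding of a halting computation has every run at or above the threshold; then universality holds iff the machine does not halt.
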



\noindent{\em Remark}.
Weighted automata have been defined over semirings~\cite{Droste:2009:HWA:1667106} as
well as using value functions (along with infimum or supremum) as above~\cite{DBLP:conf/fct/ChatterjeeDH09,DBLP:journals/tocl/ChatterjeeDH10}.                 
These variants of weighted automata have incomparable expression power.
We use the latter definition as it enables us to express long-run average and discounted sum,
which are inexpressible using weighted automata over semirings.
Long-run average and discounted sum are
widely used in quantitative verification and define natural distances (\exref{picewise-linear}).
Moreover, unlike the semiring-based definition, the value-function-based definition 
extends easily from finite to infinite words.\\
%

\noindent{\bf Similarity Functions.} In our work, we use similarity functions
to measure the similarity between words. Let $\Qinf$ denote the set $\Q \cup \{\infty\}$. 
A similarity function $d:S \times S\to \Qinf$ is a function with the properties: $\forall x,y \in S:$ (1) $d(x,y) \ge 0$ and (2)
 $d(x,y) = d(y,x)$. A similarity function $d$ is also a
distance (function or metric) if it satisfies the additional
properties: $\forall x,y,z \in S:$ (3) $d(x,y) = 0$ iff $x = y$ and (4)  $d(x,z)
\le d(x,y) + d(y,z)$. We emphasize that in our work we do not need to restrict similarity functions to be distances.  

An example of a similarity function
is the \emph{generalized Manhattan distance} defined as: 
$\hd(\s,\str) = \sum_{i=1}^{\infty} \mathtt{diff}(\s[i],\str[i])$ 
for infinite words $\s, \str$ (resp.,  
$\hd(\s,\str) = \sum_{i=1}^{max(|\s|,|\str|)} \mathtt{diff}(\s[i],\str[i])$ 
for finite $\s, \str$), where 
$\mathtt{diff}$ is the mismatch penalty for substituting letters.
The mismatch penalty is required to be a distance function 
on the alphabet (extended with a special end-of-string letter $\eos$ for finite words). When $\mathtt{diff}(a,b)$ is defined to be $1$ for all $a,b$ with $a \neq b$, and $0$ otherwise, 
$\hd$ is called the \emph{Manhattan distance}. 
 
\noindent {\em Notation}: We use $\s_1 \otimes \ldots \otimes \s_k$ to denote {\em
convolution} of words $\s_1, \ldots, \s_k$, for $k >1$. The convolution of $k$
words merges the arguments into a single word over a $k$-tuple alphabet 
(accommodating arguments of different lengths using $\eos$ letters at the ends
of shorter words).  Let $\s_1, \ldots, \s_k$ be words over alphabets $\Sigma_1,
\ldots, \Sigma_k$.  Let $\Sigma_1 \otimes \ldots \otimes \Sigma_k$ denote the
$k$-tuple alphabet $(\Sigma_1 \cup \{\eos\}) \times \ldots \times (\Sigma_k
\cup \{\eos\})$.  The convolution $\s_1 \otimes \ldots \otimes \s_k$ is an 
infinite word (resp., a finite word of length $max(|\s_1|, \ldots, |\s_k|)$), 
over $\Sigma_1 \otimes \ldots \otimes
\Sigma_k$, such that: for each $i \in \{ 1, \ldots, |\s_1 \otimes \ldots
\otimes \s_k|\}$, $(\s_1 \otimes \ldots \otimes \s_k)[i] = \langle \s_1[i],
\ldots, \s_k[i] \rangle$ (with $\s_j[i] = \eos$ if $i > |\s_j|$).  For example,
the convolution $aa \otimes b \otimes add$ is the 3 letter word $\langle a,b,a
\rangle \langle a,\eos,d \rangle \langle \eos, \eos, d \rangle$.

\begin{definition}[Automatic Similarity Function]
A similarity function $d : \Sigma_1^\omega \times \Sigma_2^\omega \mapsto \Q$ is called
automatic if there exists a \wa $\aut_d$ over $\Sigma_1
\otimes \Sigma_2$ such that $\forall \s_1 \in
\Sigma_1^\omega,\s_2 \in \Sigma_2^\omega$: $d(\s_1,\s_2) = \lang_{\aut_d}(\s_1
\otimes \s_2)$. We say that $d$ is computed by $\aut_d$.
\end{definition}

One can similarly define automatic similarity functions over finite words.

\section{Problem Definition}\label{sec:probdef}
Our notion of robustness for transducers is based on the analytic notion of
Lipschitz continuity. We first define $K$-Lipschitz robustness of functional transducers.


\begin{definition}[$K$-Lipschitz Robustness of Functional Transducers]
\label{def:robust}
Given 
a constant $K \in \Q$ with $K > 0$ and similarity functions 
$\dI:\alphI^\omega \times \alphI^\omega \; \to \; \Qinf$ 
(resp., $\dI:\alphI^* \times \alphI^* \; \to \; \Qinf$)  
and $\dO: \alphO^\omega \times \alphO^\omega \; \to \; \Qinf$ (resp., $\dO: \alphO^* \times \alphO^* \; \to \; \Qinf$), a functional transducer $\tran$, with $\funcDefinedBy{\tran} \subseteq \alphI^\omega \times \alphO^\omega$ (resp., $\funcDefinedBy{\tran} \subseteq \alphI^* \times \alphO^*$,), is called $K$-Lipschitz robust w.r.t. $\dI$, $\dO$ if:
\[\forall \s,\str \in \dom{\tran}: \ \dI(\s,\str) < \infty \, \Rightarrow \, 
\dO(\funcDefinedBy{\tran}(\s),\funcDefinedBy{\tran}(\str)) \le K \dI(\s,\str).
\]
\end{definition}


Recall that when $\tran$ is an arbitrary nondeterministic transducer,
for each $\s \in \dom{\tran}$, $\funcDefinedBy{\tran}(\s)$
is a set of words in
$\alphO^\omega$ (resp., $\alphO^*$). Hence, we cannot use a similarity function
over $\alphO^\omega$ (resp., $\alphO^*)$ to define the similarity between
$\funcDefinedBy{\tran}(\s)$ and $\funcDefinedBy{\tran}(\str)$, for $\s,\str \in
\dom{\tran}$. Instead, we must use a {\em set-similarity function} that can
compute the similarity between sets of words in $\alphO^\omega$ (resp.,
$\alphO^*$). We define $K$-Lipschitz robustness of nondeterministic transducers
using such set-similarity functions (we use the notation $d$ and $D$ for
similarity functions and set-similarity functions, respectively).

\begin{definition}[$K$-Lipschitz Robustness of Nondeterministic Transducers] 
\label{def:robust-nondet}
Given a constant $K \in \Q$ with $K > 0$, a similarity function 
$\dI:\alphI^\omega \times \alphI^\omega \; \to \; \Qinf$ (resp., $\dI:\alphI^* \times \alphI^* \; \to \; \Qinf$) and a set-similarity function 
$\dsetO: 2^{\alphO^\omega} \times 2^{\alphO^\omega} \; \to \; \Qinf$ (resp., $\dsetO: 2^{\alphO^*} \times 2^{\alphO^*} \; \to \; \Qinf$), a nondeterministic 
transducer $\tran$, with $\funcDefinedBy{\tran} \subseteq \alphI^\omega \times \alphO^\omega$ (resp. $\funcDefinedBy{\tran} \subseteq \alphI^* \times \alphO^*$), 
is called $K$-Lipschitz robust w.r.t. $\dI$, $\dsetO$ if:
\[\forall \s,\str \in \dom{\tran}: \ \dI(\s,\str) < \infty \, \Rightarrow \, 
\dsetO(\funcDefinedBy{\tran}(\s),\funcDefinedBy{\tran}(\str)) \le K \dI(\s,\str).
\]

\end{definition}

In what follows, we use $K$-robustness to denote $K$-Lipschitz robustness. 
The results in the remainder of this paper hold both for 
machines processing $\omega$-words as well as for those processing finite words. 
To keep the presentation clean, we present all results in the context of 
machines over $\omega$-words, making a distinction as needed.

\section{Synchronized (Functional) Transducers}\label{sec:regular}
In this section, we define a class of functional transducers which admits 
a decision procedure for $K$-robustness. 

\begin{definition}[Synchronized Transducers]
\label{def:regular}
A functional transducer $\tran$ with
$\funcDefinedBy{\tran} \subseteq \alphI^\omega \times \alphO^\omega$ 
is synchronized iff there exists an automaton $\aut_\tran$ over $\alphI \otimes \alphO$ 
recognizing the language $\{\s \otimes \funcDefinedBy{\tran}(\s): \s \in \dom{\tran}\}$.  
\end{definition}

Let $\tran$ be an arbitrary functional transducer. In each transition, $\tran$
reads a single input letter and may generate an empty output word or an output word
longer than a single letter. To process such non-aligned input and output words, 
the automaton $\aut_\tran$ needs to internally implement a buffer. 
Thus, $\tran$ is synchronized iff there is a bound $B$ on the required 
size of such a buffer. We can use this observation to check if $\tran$ is synchronized. 
Note that letter-to-letter transducers are synchronised, with $B$ being $0$.

\begin{restatable}{proposition}{RegularityIsDecidable}
\label{prop:RegularityIsDecidable}
Synchronicity of a functional transducer is decidable in polynomial time.
\end{restatable}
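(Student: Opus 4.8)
The plan is to make the ``buffer'' intuition of the preceding paragraph precise and then reduce the existence of a finite buffer to a cycle condition on the transition graph of $\tran$ that is checkable in polynomial time. For a run that reads $\s[1]\cdots\s[i]$ and emits output words $u_1,\dots,u_i$ on its transitions, define the \emph{delay} after $i$ steps as $\mathrm{del}_i = |u_1\cdots u_i| - i$, the difference between the number of output letters produced and the number of input letters read. Reading the convolution $\s\otimes\funcDefinedBy{\tran}(\s)$ one position at a time delivers $\s[i]$ and $\funcDefinedBy{\tran}(\s)[i]$ simultaneously, whereas $\tran$ produces output in variable-length chunks; the content an automaton $\aut_\tran$ must store, beyond the simulated state of $\tran$, is precisely the mismatch between output already committed and output already read off the second track, and the length of this mismatch is $|\mathrm{del}_i|$. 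I would therefore aim to characterize synchronicity by uniform boundedness of $|\mathrm{del}_i|$ (by some $B$) along the accepting runs of $\tran$: from a bound $B$ one builds $\aut_\tran$ whose state is a pair (state of $\tran$, buffer word of length $\le B$), yielding a finite automaton recognizing the required language.

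The second step recasts bounded delay as a graph condition. Weight each transition $(q,a,u,q')\in E$ by $|u|-1\in\mathbb{Z}$, so that $\mathrm{del}_i$ equals the sum of the weights along the corresponding path; a cycle accumulates weight equal to the number of output letters it emits minus its length. After trimming $\tran$ to the states that lie on some accepting run (reachable from $Q_0$ and, for the B\"uchi condition, able to reach a cycle meeting $F$; reach $F$ in the finite-word case), I claim that the delay is bounded iff no cycle of the trimmed automaton has nonzero weight. The ``if'' direction is the clean one: if every cycle has weight $0$ the weight function is conservative on each strongly connected component, so there is a potential $\phi\colon Q\to\mathbb{Z}$ with $\phi(q')-\phi(q)=|u|-1$ on every edge, whence $\mathrm{del}_i=\phi(q_i)-\phi(q_0)$ is bounded by $\max_q\phi(q)-\min_q\phi(q)$ and $B$ can be taken to be this quantity.

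The third step is the decision procedure, which is polynomial throughout: trimming is graph reachability; computing the strongly connected components is linear; and testing an SCC for a nonzero-weight cycle is done by fixing a spanning tree, defining $\phi$ along it, and checking $\phi(q')-\phi(q)=|u|-1$ on every non-tree edge, a violation exhibiting a nonzero cycle and global consistency certifying that all cycles have weight $0$ (equivalently, one may run Bellman--Ford for a positive cycle and, on negated weights, for a negative cycle). This settles decidability in polynomial time once the characterization is established.

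I expect the main obstacle to be the converse ``only if'' direction, namely that an unbalanced reachable-and-accepting cycle forces $\tran$ to be non-synchronized. The delicate point is that the relevant quantity is not the delay of $\tran$ itself but the \emph{minimal} buffer any recognizing automaton needs: when the output is locally predictable, an automaton can match it without storing the whole pending word, so an unbalanced cycle does not by itself make the convolution language non-regular (the degenerate case of a single-letter loop over a one-word domain is genuinely synchronized in the $\omega$-setting, even though it has a nonzero cycle). The argument must therefore use functionality together with the trimmed structure to pump an unbalanced cycle into a family of convolutions exhibiting a genuine counting obstruction of the form $x^{n}\cdots x'^{\,n}$ (with the $\eos$-padding supplying the second block in the finite-word case), which no finite automaton recognizes; reconciling this for finite words and for $\omega$-words, handling the interplay between the several accepting runs a functional transducer may have on one input, and ruling out the predictable degenerate cases, is where the real care is needed.
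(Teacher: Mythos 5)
Your central claim --- that synchronicity is equivalent to uniform boundedness of $|\mathrm{del}_i|$, equivalently to every cycle of the trimmed transducer having weight zero --- is false, and the ``only if'' direction you defer to the end is not merely delicate but unprovable. Synchronicity is an \emph{asymmetric} condition: the output must never run ahead of the input by more than a bound, but the input may run \emph{unboundedly} ahead of the output, provided that from that point on only finitely many output suffixes remain possible (a finite automaton can then nondeterministically guess the pending suffix and verify it). Concretely, let $\tran$ have states $q_0,q_1$ with the loop $(q_0,a,\es,q_0)$, the transition $(q_0,b,c,q_1)$, and $q_1$ accepting, so that $\funcDefinedBy{\tran}(a^nb)=c$ for all $n\ge 0$. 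This transducer is synchronized: the convolution language is $\lpair{b}{c}+\lpair{a}{c}\,\lpair{a}{\eos}^*\,\lpair{b}{\eos}$, plainly regular. Yet it is trim, its loop has weight $-1$, and the delay is unbounded, so your algorithm answers ``not synchronized.'' Your procedure therefore decides a strictly stronger property: zero-weight cycles are \emph{sufficient} for synchronicity (your potential-function argument for that direction is sound), but far from necessary. Note also that this counterexample lives in the finite-word setting and has an infinite domain, so the failure is not confined to the ``one-word $\omega$-domain'' corner you flag; it is the generic behaviour of any transducer that defers a finitely-valued output.

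The paper's proof is organized around exactly this asymmetry. It characterizes synchronicity by the existence of a bound $B$ \emph{and a finite language} $\Lfin$ such that every word in $\dom{\tran}$ has an accepting run in which (a) the produced output never exceeds the consumed input by more than $B$ letters, and (b) whenever the consumed input exceeds the produced output by more than $B$ letters, the entire remaining output belongs to $\Lfin$. Necessity is proved by pumping against a hypothetical recognizing automaton $\aut$: for (a), pumping the $\eos$-padded tail of $\s\otimes\funcDefinedBy{\tran}(\s)$ produces a second, longer output for the same input, contradicting functionality of $\tran$; for (b), a pigeonhole argument over $|\aut|+1$ continuations with pairwise distinct outputs from a lagging state lets one exchange the outputs of two different inputs, again contradicting functionality. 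Neither argument reduces to exhibiting an $x^n y^n$ pattern in the convolution language, and by the example above no such reduction can exist. Correspondingly, the decision procedure is more than a cycle-weight test: on the trimmed transducer one checks that no cycle outputs more letters than it reads (condition (a)), and that all states reachable from cycles with empty output have \emph{finite} output languages (condition (b)); both are polynomial-time checks. To repair your proof you would need to import both the $\Lfin$ component of the characterization and this finiteness test into your algorithm, rather than strengthen the pumping argument.
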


%
%

Synchronized transducers admit an automata-theoretic decision procedure 
for checking $K$-robustness w.r.t. similarity functions satisfying certain properties. 

\begin{restatable}{theorem}{RobustnessIsDecidable}
\label{l:simple-regular} 
\label{thm:decidable.regular} 
For every $f \in \valFun$, if $\dI$, $\dO$ are similarity functions
computed by functional $f$-\wa $\aut_{\dI}$, $\aut_{\dO}$, respectively, and $\tran$ is a synchronized transducer, 
$K$-robustness of $\tran$ w.r.t. $\dI, \dO$ is decidable in
polynomial time in the sizes of $\tran$, $\aut_{\dI}$ and $\aut_{\dO}$.
\end{restatable}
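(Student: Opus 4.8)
The plan is to reduce non-$K$-robustness of $\tran$ to an emptiness question for a single weighted automaton built as a synchronized product of $\aut_\tran$, $\aut_{\dI}$ and $\aut_{\dO}$, and then invoke Lemma~\ref{lem:oldResults}(1). By Definition~\ref{def:robust}, $\tran$ is \emph{not} $K$-robust iff there exist $\s,\str\in\dom{\tran}$ with $\dI(\s,\str)<\infty$ and $\dO(\funcDefinedBy{\tran}(\s),\funcDefinedBy{\tran}(\str)) > K\,\dI(\s,\str)$. Since $\tran$ is functional and synchronized, each such pair is faithfully encoded by the four-way convolution $w = \s \otimes \str \otimes \funcDefinedBy{\tran}(\s) \otimes \funcDefinedBy{\tran}(\str)$ over $\alphI\otimes\alphI\otimes\alphO\otimes\alphO$: projecting $w$ onto coordinates $(1,3)$ gives $\s\otimes\funcDefinedBy{\tran}(\s)$ and onto $(2,4)$ gives $\str\otimes\funcDefinedBy{\tran}(\str)$, both accepted by $\aut_\tran$; projecting onto $(1,2)$ yields the argument of $\aut_{\dI}$ and onto $(3,4)$ the argument of $\aut_{\dO}$. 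I build a product weighted automaton $\mathcal{B}$ running two copies of $\aut_\tran$ (on coordinates $(1,3)$ and $(2,4)$, certifying that both pairs are genuine transductions of domain words — here functionality pins down coordinates $3$ and $4$ uniquely), a copy of $\aut_{\dI}$ on $(1,2)$ and a copy of $\aut_{\dO}$ on $(3,4)$, with generalized-B\"uchi acceptance requiring all four components to accept, and each transition weighted by $K\cdot c_{\dI} - c_{\dO}$, where $c_{\dI},c_{\dO}$ are the weights taken in the $\aut_{\dI}$- and $\aut_{\dO}$-components. Then $\tran$ is not $K$-robust iff $\mathcal{B}$ accepts some word of value $<0$, i.e.\ iff $\mathcal{B}$ is non-empty for threshold $\const=0$, which is polynomial-time decidable by Lemma~\ref{lem:oldResults}(1).

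The correctness claim to establish is $\valueL{\mathcal{B}}(w) = K\,\dI(\s,\str) - \dO(\funcDefinedBy{\tran}(\s),\funcDefinedBy{\tran}(\str))$. Because $\aut_{\dI}$ and $\aut_{\dO}$ are \emph{functional}, every accepting run of $\mathcal{B}$ over $w$ projects to component runs whose $\aut_{\dI}$- and $\aut_{\dO}$-values are exactly $\dI$ and $\dO$; so it suffices that the value function distributes over the combined weighting. For $\fsum$ this is immediate, since $\sum_i (K c_{\dI,i} - c_{\dO,i}) = K\sum_i c_{\dI,i} - \sum_i c_{\dO,i}$, and for $\fdisc$ it follows identically from linearity of $\sum_i \delta^i(\cdot)$.

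I expect the $\flimavg$ case to be the main obstacle, since linearity fails there: $\limsup_k \frac1k\sum_{i\le k}(K c_{\dI,i}-c_{\dO,i})$ need not equal $K\limsup_k\frac1k\sum_{i\le k} c_{\dI,i} - \limsup_k\frac1k\sum_{i\le k} c_{\dO,i}$, so $\mathcal{B}$ does not compute the difference run-for-run. The resolution is to exploit that $\flimavg$-emptiness in Lemma~\ref{lem:oldResults}(1) is witnessed by \emph{ultimately periodic} runs (lassos): along a lasso both Cesàro averages converge to the corresponding cycle averages, which by functionality equal $\dI$ and $\dO$, so the limit $\lim_k \frac1k\sum_{i\le k}(K c_{\dI,i}-c_{\dO,i})$ exists and equals $K\,\dI - \dO$ exactly. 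Hence $\mathcal{B}$ has a run of negative value iff it has a periodic one iff there is a genuine witness pair with $\dO > K\,\dI$, which restores the equivalence for $\flimavg$ as well.

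It remains to enforce $\dI(\s,\str)<\infty$, so as to exclude vacuous pairs where $\dI=\dO=\infty$ would make the weighting ill-defined. For $\fdisc$ and $\flimavg$ this is automatic, as bounded weights make every value finite, and I keep the full product. For $\fsum$ (whose similarity functions, such as the generalized Manhattan distance, carry non-negative weights) the set $\{\s\otimes\str : \dI(\s,\str)<\infty\}$ is $\omega$-regular — the co-B\"uchi condition forbidding infinitely many nonzero-weight transitions of $\aut_{\dI}$ — so I intersect $\mathcal{B}$ with a constant-size automaton for it; pairs with $\dI<\infty,\ \dO=\infty$ still receive value $-\infty<0$ and are correctly flagged. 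Finally, $\mathcal{B}$ has $|Q_\tran|^2\cdot|Q_{\aut_{\dI}}|\cdot|Q_{\aut_{\dO}}|$ states and is constructed in polynomial time, which composed with polynomial-time emptiness (Lemma~\ref{lem:oldResults}(1)) gives the claimed bound. The finite-word case is analogous, using final-state acceptance and $\eos$-padding in the convolution.
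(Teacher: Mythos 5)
Your construction is exactly the paper's: the synchronized product over $\alphI\otimes\alphI\otimes\alphO\otimes\alphO$ of two copies of $\aut_\tran$ with $\aut_{\dI}$ (weights scaled by $K$) and $\aut_{\dO}$ (weights negated), reducing non-$K$-robustness to emptiness below threshold $0$, and your $\fsum$/$\fdisc$ argument is the paper's case of linearity plus functionality. The genuine gap is in the $\flimavg$ case: your chain ``$\mathcal{B}$ has a run of negative value iff it has a periodic one iff there is a genuine witness pair'' is justified by your lasso argument in only one direction. A negative-mean accepting lasso does decompose, via convergence of Ces\`aro averages and functionality, into a witness pair with $\dO > K\dI$; but you never prove the converse, that a witness pair yields a negative-value (hence a negative lasso) run. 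As you yourself observe, the run of $\mathcal{B}$ induced by a witness pair $(\s,\str)$ satisfies only $\flimavg(Kc_I-c_O)\;\ge\; K\dI(\s,\str)-\dO(\funcDefinedBy{\tran}(\s),\funcDefinedBy{\tran}(\str))$ by superadditivity of $\limsup$, so its value can be nonnegative even though $K\dI-\dO<0$ (only its \emph{liminf}-average is forced below $K\dI-\dO$); and since the witness pair need not be ultimately periodic, the lasso convergence cannot be invoked for it. This is precisely the hard direction, and it is also where the paper's own proof slips: from the correct inequality $\limsup(K\sigma_1-\sigma_2)\ge K\limsup\sigma_1-\limsup\sigma_2$ it ``concludes'' the reversed, generally false, inequality $\flimavg(K\sigma_1-\sigma_2)\le K\flimavg(\sigma_1)-\flimavg(\sigma_2)$, which is what that direction would need.

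The missing step can be repaired, so your plan is completable. Suppose every accepting lasso of $\mathcal{B}$ has nonnegative cycle mean. Then every cycle lying in the strongly connected component of a reachable accepting state $f$ has nonnegative total weight (otherwise pump it inside a cycle through $f$ and obtain a negative-mean accepting lasso). Consequently, along any accepting run the partial sums are bounded from below after the first visit to $f$, so every accepting run of $\mathcal{B}$ has nonnegative \emph{liminf}-average. Applying this to the run induced by an arbitrary pair $(\s,\str)$ with $\dI(\s,\str)<\infty$ gives, for every $\epsilon>0$ and all sufficiently large $k$, $\frac{1}{k}\sum_{i\le k}c_O[i]\le \frac{K}{k}\sum_{i\le k}c_I[i]+\epsilon$; taking $\limsup$ over $k$ and using functionality of $\aut_{\dI}$ and $\aut_{\dO}$ yields $\dO\le K\dI$. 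This establishes ``all accepting lassos nonnegative $\Rightarrow$ $\tran$ is $K$-robust,'' i.e., exactly the direction your proof asserts but does not prove. (A side remark: your co-B\"uchi treatment of finiteness of $\dI$ for $\fsum$ is more careful than the paper, though both you and the paper silently ignore pairs where $\dI$ is finite but $\aut_{\dO}$ rejects the output pair, in which case $\dO=\infty$ and the product cannot flag the violation.)
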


We show that for every $f \in \valFun$,
if the conditions of \thmref{decidable.regular} are met, 
$K$-robustness of $\tran$ can be reduced to the emptiness problem for
$f$-weighted automata, which is decidable in polynomial time.\\

\noindent {\em Similarity functions computed by nondeterministic automata.}
If we permit the weighted automata computing the similarity functions $\dI, \dO$ to be nondeterministic, $K$-robustness becomes undecidable. 
We can show that the universality problem for nondeterministic weighted automata  
reduces to checking $1$-robustness.
Indeed, given a nondeterministic weighted automaton $\aut$, consider (1)~$\dI$ such that $\forall \s,\str \in \alphI^{\omega}$: 
$\dI(\s,\str) = \lambda$ if $\s = \str$, and undefined otherwise,  (2)~$\tran$ encoding the 
identity function, and (3)~$\dO$ such that $\forall \s',\str' \in \alphI^{\omega}$: 
$\dO(\s',\str') = \lang_\aut(\s')$ if $\s' = \str'$, and undefined otherwise. 
Note that $\dO$ is computed by a nondeterministic weighted automaton obtained 
from $\aut$ by changing each transition $(q,a,q')$ in $\aut$ to $(q,(a,a),q')$ while 
preserving the weight.  Then, $\tran$ is $1$-robust w.r.t. $\dI, \dO$ 
iff for all words $\s$, $\lang_\aut(\s) \leq \lambda$. Since the universality problem for $f$-weighted automata is undecidable (e.g., for $f = \fsum$), it follows that checking $1$-robustness of transducers 
with similarity functions computed by nondeterministic weighted automata is undecidable. 


We now present examples of synchronized transducers and automatic similarity
functions satisfying the conditions of \thmref{decidable.regular}.
\begin{example}{\bf Mealy machines and generalized Manhattan distances.}  Mealy
machines are perhaps the most widely used transducer model. Prior work
\cite{foundation} has shown decidability of robustness of Mealy machines with
respect to generalized Manhattan distances given a fixed bound on the amount of
input perturbation. In what follows, we argue the decidability of robustness of
Mealy machines (processing infinite words) with respect to generalized
Manhattan distances in the presence of unbounded input perturbation.

A Mealy machine $\tran: (\alphI,\alphO,Q,\{q_0\},E_\tran,Q)$ is a synchronized
transducer with $\aut_\tran$ given by $(\alphI \otimes \alphO, Q,
\{q_0\},E_{\aut_\tran},Q)$, where $E_{\aut_\tran} = \{ (q, a \otimes a', q') :
(q,a,a',q') \in E_\tran \}$.  The generalized \lnorm  
$\hd: \alphI^\omega \times \alphI^\omega \to \Qinf$ can be computed by a 
functional $\fsum$-weighted automaton $\aut_{M}$
given by the tuple $(\alphI \otimes \alphI,\{q_0\}, \{q_0\}, E_M, \{ q_0 \},
\cost)$.  Here, $q_0$ is the initial as well as the accepting state, $E_M =
\{(q_0, a \otimes b, q_0): a \otimes b \in \alphI \otimes \alphI 
 \}$, and the weight of each transition $(q_0, a \otimes
b, q_0)$ equals $\diff(a,b)$. 

Thus, all the conditions of \thmref{decidable.regular} are satisfied. 
$K$-robustness of Mealy machines, when $\dI$ and $\dO$ are computed 
as the generalized \lnorm, is decidable in polynomial time. 
\end{example}

\begin{example}{\bf Piecewise-linear functions.} 
\label{ex:picewise-linear}
Let us use $\fraction{q}$ to denote an infinite word over $\{0, \ldots,9, +,-\}$ 
representing the fractional part
of a real number in base 10.  E.g., $\fraction{-0.21} = -21$ and $\fraction{\pi - 3} = 1415\ldots$ 
Then, $\fraction{q_1} \otimes \ldots \otimes \fraction{q_k}$ is a word 
over $\{0, \ldots,9, +,-\} \otimes \ldots \otimes \{0, \ldots,9, +,-\}$ that
represents a $k$-tuple of real numbers $q_1,\ldots, q_k$ from the interval $(-1,1)$.
Now, observe that one can define letter-to-letter transducers that compute the following functions:
(1)~swapping of arguments, $\funcDefinedBy{\trans}(\fraction{q_1}, \ldots, \fraction{q_l}, \ldots, \fraction{q_m}, \ldots , \fraction{q_k}) = 
(\fraction{q_1}, \ldots, \fraction{q_m}, \ldots, \fraction{q_l}, \ldots , \fraction{q_k})$,
(2)~addition, $\funcDefinedBy{\trans}(\fraction{q_1}, \ldots, \fraction{q_k}) = (\fraction{q_1+q_2}, \fraction{q_2},  \ldots, \fraction{q_k})$,
(3)~multiplication by a constant $c$, $\funcDefinedBy{\trans}(\fraction{q_1}, \ldots, \fraction{q_k}) = 
(\fraction{c q_1}, \ldots, \fraction{c q_k})$,
(4)~projection, $\funcDefinedBy{\trans}(\fraction{q_1}, \ldots, \fraction{q_k}) = (\fraction{q_1}, \ldots, \fraction{q_{k-1}})$, and 
(5)~conditional expression, $\funcDefinedBy{\trans}(\fraction{q_1}, \ldots, \fraction{q_k})$ equals  
$\funcDefinedBy{\trans_1}(\fraction{q_1}, \ldots, \fraction{q_k})$, 
if $\fraction{q_1} > 0$, and $\funcDefinedBy{\trans_2}(\fraction{q_1}, \ldots, \fraction{q_k})$ otherwise. 
We assume that the transducers reject if the results of the corresponding functions 
lie outside the interval $(-1,1)$. We can model a large class of piecewise-linear functions using transducers obtained by 
composition of transducers (1)-(5).  The resulting transducers are functional letter-to-letter transducers.

Now, consider $\dI$, $\dO$ defined as the $\mathit{L1}$-norm over $\mathbb{R}^k$, i.e.,
$\dI(\fraction{q_1} \otimes \ldots \otimes \fraction{q_k},
\fraction{q_1'} \otimes \ldots \otimes \fraction{q_k'})$ $=$ 
 $\dO(\fraction{q_1} \otimes \ldots \otimes \fraction{q_k},
\fraction{q_1'} \otimes \ldots \otimes \fraction{q_k'})$
$=$ $\sum_{i=1}^k \text{abs}(q_i - q_i')$. 
Observe that $\dI, \dO$ can be computed by
deterministic $\fdisc$-weighted automata, with $\delta =
\frac{1}{10}$. Therefore,  
$1$-robustness of $\tran$ can be decided in polynomial
time (\thmref{decidable.regular}). Finally, note that $K$-robustness of a transducer computing a  
piecewise-linear function $h$ w.r.t. the above similarity functions is equivalent 
to Lipschitz continuity of $h$ with coefficient $K$. 
\end{example}

\section{Functional Transducers}\label{sec:functional}
It was shown in \cite{foundation} that checking $K$-robustness of a functional
transducer w.r.t. to a fixed bound on the amount of input perturbation is
decidable.  In what follows, we show that when the amount of input
perturbation is unbounded, the robustness problem becomes undecidable 
even for deterministic transducers.

\begin{restatable}{theorem}{FunctionalUndecidable}
$1$-robustness of deterministic transducers is undecidable.
\label{th:functional-undecidable}
\label{thm:funcundec}
\end{restatable}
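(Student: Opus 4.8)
The plan is to prove undecidability by reducing from a known undecidable problem about weighted automata, namely the universality problem for $\fsum$-automata with weights in $\{-1,0,1\}$, which Lemma~\ref{lem:oldResults}(2) asserts is undecidable. The goal is to encode the question ``is $\valueL{\aut}(\s) < \const$ for all words $\s$?'' (equivalently, after negation and rescaling, whether some quantity is always bounded) as a $1$-robustness question for a deterministic transducer. The subtlety, compared to the easy nondeterministic-similarity-function argument sketched right before \thmref{decidable.regular}, is that here the similarity functions $\dI,\dO$ must stay benign (indeed, they should be fixed, natural functions such as the generalized \lnorm), and all the computational content must be pushed into the transducer $\tran$ itself, which is required to be \emph{deterministic}.

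The key idea is to let the deterministic transducer carry the weight computation in its output. Concretely, I would take an input word of the form $\s = c_1 c_2 \cdots$ over the transition alphabet of $\aut$ and have $\tran$ deterministically simulate a \emph{chosen} run of $\aut$ (deterministic because the input directly spells out which transition to take at each step), emitting at each step a block of output letters whose length or letter-content encodes the accumulated weight. Since $\fsum$ weights lie in $\{-1,0,1\}$, partial sums are integers, and I can represent the running sum in unary (using $+$ and $-$ marker letters, or padding with a neutral letter) so that $\dO(\funcDefinedBy{\tran}(\s),\funcDefinedBy{\tran}(\str))$ under the generalized \lnorm measures exactly the difference in accumulated weights at positions where $\s,\str$ diverge. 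I would then design two input words $\s,\str$ that agree up to some point and differ by a single controlled perturbation, so that $\dI(\s,\str)$ is a fixed small constant while $\dO$ on the outputs equals (a function of) the value $\valueL{\aut}$ on the relevant suffix. Violation of $1$-robustness for \emph{some} such pair then corresponds exactly to the weighted automaton's value exceeding the threshold somewhere, giving the reduction.

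The main obstacle, and the step requiring the most care, is aligning the quantifier structure and the threshold. $K$-robustness is a \emph{universal} statement over all pairs $\s,\str$, whereas universality of $\aut$ is a universal statement over all single words $\s$; the challenge is to build the gadget so that the single universal quantifier of the automaton problem maps faithfully onto the pair-quantifier of robustness, with no spurious violating pairs introduced by the encoding. I expect to handle this by making $\tran$ emit, alongside the weight encoding, a synchronization component that forces $\dO$ to be trivially zero (or $\dI$ to be $\infty$) on all pairs \emph{except} the intended family of ``one perturbation'' pairs, so that only the intended comparisons are active. A secondary technical point is ensuring the encoding keeps partial sums bounded enough that $\dI$ and $\dO$ remain finite and the inequality $\dO \le K\,\dI$ captures precisely the threshold comparison; since weights are in $\{-1,0,1\}$ and the \lnorm is additive, this should reduce to a clean per-position bookkeeping, but it must be checked that negative partial sums are represented without letting the similarity function conflate magnitudes of opposite sign.

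Finally, I would verify the two directions of the reduction explicitly: if $\aut$ is universal (value always below threshold) then every intended pair satisfies $\dO \le \dI$, so $\tran$ is $1$-robust; conversely, any violation $\dO > \dI$ pinpoints an input suffix on which $\aut$'s value exceeds the threshold, witnessing non-universality. Since the construction of $\tran$, $\dI$, and $\dO$ from $\aut$ is plainly effective and polynomial, decidability of $1$-robustness would decide universality, contradicting Lemma~\ref{lem:oldResults}(2); hence $1$-robustness of deterministic transducers is undecidable.
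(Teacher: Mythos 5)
There is a genuine gap, and it is exactly the quantifier mismatch you flag in your third paragraph; the ``synchronization component'' you propose does not repair it. Once the run of $\aut$ is spelled out in the input of a \emph{deterministic} transducer and $\dI,\dO$ are fixed functional similarity functions (e.g., generalized Manhattan distances), every quantity in the robustness condition is a plain function of the input pair, so non-robustness is the purely existential statement ``there \emph{exists} a pair of (word,\,run)-encodings with $\dO > K\dI$''. But non-universality of a $\fsum$-automaton is ``there exists $\s$ such that \emph{every} accepting run of $\aut$ on $\s$ has value $\geq \lambda$'', i.e., $\valueL{\aut}(\s)=\inf_{\pi}\fsum(\pi)\geq\lambda$. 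A violating pair only witnesses properties of the two runs it encodes; it can never certify that \emph{all} runs on that word are expensive. Restricting which pairs are ``active'' only removes violating pairs; it cannot convert an existential over runs into a universal one. Concretely, adding one cheap accepting run to $\aut$ can flip it from non-universal to universal, yet under any construction of this shape the previously violating input pairs remain violating, so the claimed equivalence breaks. Put differently: encoding the nondeterministic choices into the input alphabet is precisely the move the paper makes in \thmref{HausUndecidable}(iii) to prove the Sup-sup case \emph{decidable} --- pushing runs into the input replaces the infimum over runs by an existential choice, so pairwise comparisons can only express sup-sup (max-over-runs) quantities, whose robustness questions are decidable. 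The infimum over runs that makes universality hard is available only when supplied by the similarity function itself (the remark before \thmref{decidable.regular}, which needs a \emph{nondeterministic} weighted automaton for $\dO$) or by a set-similarity function (Hausdorff or Inf-inf in \thmref{HausUndecidable}(i),(ii)) --- both resources your setting forbids. (A secondary flaw: a finite-state transducer cannot emit blocks whose length encodes the \emph{accumulated} weight, since partial sums are unbounded while per-transition output is bounded; it can only emit the per-step weight.)

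The paper instead reduces from PCP, where the quantifier structures match: a PCP instance has a solution iff there \emph{exists} a violating pair. Its deterministic transducer translates index sequences, tagged with a polarity $L$ or $R$, into $v_{i_1}\cdots v_{i_n}$ or $w_{i_1}\cdots w_{i_n}$; the input distance $\dI$ is a generalized Manhattan distance that is finite only on pairs consisting of the $L$- and $R$-tagged versions of the same index sequence, and $\dO$ is tuned so that $\dO > \dI$ holds exactly when the two translated output words coincide letter-by-letter, i.e., exactly when the index sequence is a PCP solution. If you want to salvage a weighted-automaton route, you must keep the infimum over runs inside the output comparison, which forces the nondeterministic-transducer/set-similarity setting of \thmref{HausUndecidable}, not the deterministic setting of \thmref{funcundec}.
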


\begin{proof} 
The Post Correspondence Problem (PCP) is defined as follows. 
Given a set of word pairs $\{\lpair{v_1}{w_1}, \dots, \lpair{v_k}{w_k}\}$,
does there exist a sequence of indices $i_1, \dots, i_n$ such that
$v_{i_1}\cdot\ldots\cdot v_{i_n} = w_{i_1}\cdot \ldots \cdot w_{i_n}$? 
PCP is known to be undecidable.
 
Let $\PCPinst_{pre} = \{ \lpair{v_1}{w_1}, \dots, \lpair{v_{k}}{w_{k}} \}$ be a
PCP instance with $v_i, w_i \in \{a,b \}^*$ for each $i \in [1,k]$. We define a
new instance $\PCPinst = \PCPinst_{pre} \cup \{\lpair{v_{k+1}}{w_{k+1}}\}$,
where $\lpair{v_{k+1}}{w_{k+1}} = \lpair{\$}{\$}$.  Observe that 
for $i_1, \dots, i_n \in [1,k]$,
$i_1, \dots, i_n, k+1$  is a solution of $\PCPinst$ iff 
$i_1, \dots, i_n$ is a solution of
$\PCPinst_{pre}$. We define a deterministic transducer $\trans$ processing finite words 
and generalized
Manhattan distances $\dI, \dO$ such that $\trans$ is \emph{not} $1$-robust
w.r.t. $\dI, \dO$ iff $\PCPinst$ has a solution of the form $i_1, \dots, i_n,
k+1$, with $i_1, \dots, i_n \in [1,k]$.

We first define $\trans$, which translates indices into corresponding words from the PCP
instance $\PCPinst$. The input alphabet $\alphI$ is the set of indices from
$\PCPinst$, marked with a \emph{polarity}, $L$ or $R$, denoting whether an
index $i$, corresponding to a pair $\lpair{v_i}{w_i} \in \PCPinst$, 
is translated to $v_i$ or $w_i$.
Thus, 
$\alphI = \{1,\ldots, k+1\} \times \{ L,R\}$. 
The output alphabet $\Gamma$ is the alphabet of words in $\PCPinst$, marked with
a polarity. Thus, 
$\Gamma = \{a,b,\$\} \times \{ L,R\}$.  
The domain of $\funcDefinedBy{\trans}$ is described by the following
regular expression: 
$\dom{\trans} =   \Sigma_L^* \lpair{k+1}{L} +  \Sigma_R^*\lpair{k+1}{R}$,
where for $P \in \{L,R\}$, $\Sigma_P = \{1,\dots, k\} \times \{P\}$.
Thus, $\tran$ only processes input words over letters
with the same polarity, rejecting upon reading an input letter with a polarity
different from that of the first input letter. Moreover, $\tran$ accepts iff
the first occurrence of $\lpair{k+1}{L}$ or $\lpair{k+1}{R}$ is in the last
position of the input word.  Note that the domain of $\trans$ is
\emph{prefix-free}, i.e., if $\s,\str \in \dom{\trans}$ and $\s$ is a prefix of
$\str$, then $\s = \str$. Let $u^P$ denote the word $u \otimes P^{|u|}$.
Along accepting runs, $\tran$ translates
each input letter $\lpair{i}{L}$ to $v_i^L$ and each letter $\lpair{i}{R}$ to
$w_i^R$, where $\lpair{v_i}{w_i}$ is the $i^{th}$ word pair of $\PCPinst$.
Thus, the function computed by $\tran$ is: 
\begin{align*}
\funcDefinedBy{\trans}(\lpair{i_1}{L} \dots \lpair{i_n}{L}\lpair{k+1}{L}) &=
v_{i_1}^L \dots v_{i_n}^L v_{k+1}^L \\ \funcDefinedBy{\trans}(\lpair{i_1}{R}
\dots \lpair{i_n}{R}\lpair{k+1}{R}) &= w_{i_1}^R \dots w_{i_n}^R w_{k+1}^R
\end{align*}

We define the output similarity function $\dO$ as a generalized Manhattan distance with
the following {\em symmetric} $\diffO$ where $P,Q \in \{ L,R\}$ and $\alpha, \beta
\in \{a,b, \$\}$ with $\alpha \neq \beta$:
\begin{center}
\begin{tabular}{l l}
$\diffO(\lpair{\alpha}{P},\lpair{\alpha}{P}) = 0$ &  $\;\;\;\diffO(\lpair{\alpha}{L},\lpair{\alpha}{R}) = 2$ \\ 
	$\diffO(\lpair{\alpha}{P},\lpair{\beta}{Q}) = 1$ & $\;\;\;\diffO(\lpair{\alpha}{P},\eos) = 1$\\
\end{tabular}
\end{center}
\noindent Note that for $\s',\str' \in \Gamma^*$ with different polarities, 
$\dO(\s',\str')$ equals the sum of $max(|\s'|, |\str'|)$ and $\N(\s',\str')$, 
where $\N(\s',\str')$ is the number of positions in which 
$\s'$ and $\str'$ agree on the first components of their letters. 

Let us define a projection $\pi$ as
$\pi(\lpair{i_1}{P_1} \lpair{i_2}{P_2} \dots \lpair{i_n}{P_n}) = i_1 i_2 \dots i_n$,
where $i_1,\dots,  i_n \in [1,k+1]$ and $P_1, \dots, P_n \in \{ L, R\}$.
We define the input similarity function $\dI$ as a generalized Manhattan distance such that 
$\dI(\s, \str)$ is finite iff $\pi(\s)$ is a prefix of $\pi(\str)$ or vice versa. 
We define $\dI$ using the following {\em symmetric} $\diffI$ where 
$P,Q \in \{ L,R\}$ and $i,j \in [1,k+1]$ with $i \neq j$:
\begin{center}
\begin{tabular}{l l}
$\diffI(\lpair{i}{P},\lpair{i}{P}) = 0$ & 
$\;\;\; \diffI(\lpair{i}{P},\lpair{j}{Q}) = \infty$ \\
$\diffI(\lpair{i}{L},\lpair{i}{R}) = |v_{i}|+ |w_i|,  \text{ if } i \in [1,k]$ & 
$\;\;\;  \diffI(\lpair{i}{P}, \eos) = \infty$ \\
$\diffI(\lpair{k+1}{L},\lpair{k+1}{R}) = 1$ & \\ 
\end{tabular}
\end{center}
Thus, for all $\s,\str \in \dom{\trans}$, $\dI(\s,\str) < \infty$ iff one of the following holds:
\begin{compactenum}[(i)]
\item for some $P \in \{L,R\}$, $\s = \str = \lpair{i_1}{P} \dots \lpair{i_n}{P} \lpair{k+1}{P}$, or,  
\item $\s = \lpair{i_1}{L} \dots \lpair{i_n}{L}\lpair{k+1}{L} $ and 
$\str = \lpair{i_1}{R} \dots \lpair{i_n}{R} \lpair{k+1}{R}$. 
\end{compactenum}
\medskip
In case (i), $\dI(\s,\str) = \dO(\funcDefinedBy{\trans}(\s),\funcDefinedBy{\trans}(\str)) = 0$. 
In case (ii), $\dI(\s, \str) = |\funcDefinedBy{\trans}(\s)| + |\funcDefinedBy{\trans}(\str)|-1$ 
and $\dO(\funcDefinedBy{\trans}(\s),\funcDefinedBy{\trans}(\str)) = 
max(|\funcDefinedBy{\trans}(\s)|, |\funcDefinedBy{\trans}(\str)|) +  
\N(\funcDefinedBy{\trans}(\s), \funcDefinedBy{\trans}(\str))$. 
Thus, $\dO(\funcDefinedBy{\trans}(\s), \funcDefinedBy{\trans}(\str)) > \dI(\s,\str)$ iff  
$\N(\funcDefinedBy{\trans}(\s), \funcDefinedBy{\trans}(\str)) = min(|\funcDefinedBy{\trans}(\s)|, |\funcDefinedBy{\trans}(\str)|)$. 
Since the letters $\lpair{\$}{L}, \lpair{\$}{R}$ occur exactly once in $\funcDefinedBy{\trans}(\s)$, $\funcDefinedBy{\trans}(\str)$,
respectively, at the end of each word, $\N(\funcDefinedBy{\trans}(\s), \funcDefinedBy{\trans}(\str)) = min(|\funcDefinedBy{\trans}(\s)|, |\funcDefinedBy{\trans}(\str)|)$ iff $|\funcDefinedBy{\trans}(\s)| = |\funcDefinedBy{\trans}(\str)|$ and 
$\pi(\funcDefinedBy{\trans}(\s)) = \pi(\funcDefinedBy{\trans}(\str))$, which holds 
iff $\PCPinst$ has a solution. 
Therefore, $\trans$ is \emph{not} $1$-robust w.r.t. $\dI, \dO$ iff
$\PCPinst$ has a solution.
\end{proof}


We have shown that checking $1$-robustness w.r.t. generalized Manhattan
distances is undecidable. Observe that for every $K>0$, $K$-robustness can be 
reduced to $1$-robustness by scaling the output distance by $K$.  We conclude that 
checking $K$-robustness is undecidable for any fixed $K$.  In contrast, if $K$ is
not fixed, checking if there exists $K$ such that $\tran$ is $K$-robust w.r.t. $\dI$, $\dO$ 
is decidable for transducers processing finite words.

Let us define a functional transducer $\tran$ to be \emph{robust} w.r.t. $\dI$,
$\dO$ if there exists $K$ such that $\tran$ is $K$-robust w.r.t. $\dI$, $\dO$.


\begin{restatable}{proposition}{IfKrobustThenRobust}
Let $\tran$ be a given functional transducer processing finite words and $\dI$, $\dO$ be 
instances of the generalized Manhattan distance. 
\begin{enumerate}
\item Robustness of $\tran$ is decidable in \conp. 
\item One can compute $K_{\tran}$ such that $\tran$ is robust  
iff $\tran$ is $K_{\tran}$-robust.
\end{enumerate}
\end{restatable}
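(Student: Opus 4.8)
The plan is to recast robustness as a boundedness question and then reduce it to a cycle analysis on a product machine. Since $\diffI,\diffO$ are distance functions, $\dI(\s,\str)=0$ forces $\s=\str$ and hence $\dO=0$, so $\tran$ is robust iff the ratio $\dO(\funcDefinedBy{\tran}(\s),\funcDefinedBy{\tran}(\str))/\dI(\s,\str)$ is bounded over all $\s,\str\in\dom{\tran}$ with $0<\dI(\s,\str)<\infty$. I would capture these ratios with a product machine $\mathcal{P}$ over $\alphI\otimes\alphI$ whose accepting runs range exactly over the pairs $(\s,\str)$ of inputs in $\dom{\tran}$ with finite $\dI$: it simulates one copy of $\tran$ on each component and accumulates the input penalty $\diffI(\s[i],\str[i])$ at step $i$. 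Because $\dO$ is also a position-wise sum for the generalized Manhattan distance, $\mathcal{P}$ additionally matches the two emitted output streams letter-by-letter and accumulates the output penalty $\diffO$. The only piece of state that is not a priori bounded is the \emph{output delay}, i.e. the difference between the numbers of output letters produced so far by the two copies.

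The heart of the argument is a \emph{twinning}-style lemma controlling this delay, and this is where I expect the main technical effort to lie. I would show that there is a constant $c$, computable and polynomial in $\tran$ and the finite entries of $\diffI,\diffO$, such that either (i) along every finite-$\dI$ pair the delay stays bounded by $c\,(\dI(\s,\str)+1)$, or (ii) there is a reachable pair of states of $\tran\times\tran$, diverged by a bounded-penalty prefix, that carries a cycle readable by a \emph{common} input letter in both copies (so $\diffI=0$ on the cycle) along which the net output cost per traversal is positive, or the delay strictly changes. In case (ii) I would pump the cycle to produce a family $(\s_n,\str_n)$ with $\dI$ fixed but $\dO\to\infty$ (either from repeatedly mismatched matched output, or from an ever-growing final length difference that eventually pays $\diffO(\cdot,\eos)$), witnessing non-robustness. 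In case (i) the delay is bounded whenever $\dI$ is, so for bounding the ratio I may truncate $\mathcal{P}$ to delays below the threshold and obtain a genuinely finite weighted structure. The novelty over the classical twinning property is (a) that we compare a \emph{ratio} of two sum-distances rather than plain output delay, and (b) that the mismatch penalties may take the value $\infty$, which must be folded into the reachability of admissible configurations.

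For part~1, the lemma gives the decision procedure. $\tran$ fails to be robust iff $\mathcal{P}$ contains a \emph{bad configuration}: a pair of states of $\tran\times\tran$ reachable by a finite-$\diff$ input, lying on a zero-input-penalty cycle whose net output penalty is positive or whose net delay change is nonzero, and completable to accepting runs on both components. By standard short-witness/pumping arguments such a configuration, together with its reachability path, the simple cycle, and the completing suffixes, admits a description of size polynomial in $\tran$, $\diffI$, $\diffO$; its validity (transition compatibility, functionality, and the signs of the accumulated penalties) is checkable in polynomial time. Hence non-robustness is in \np, and robustness is in \conp.

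For part~2 I would extract an explicit bound from the same finite structure. If no bad configuration exists, every finite-$\dI$ pair decomposes into simple paths and simple cycles of the truncated product, and the supremum of $\dO/\dI$ is the maximum of the finitely many output-to-input penalty ratios arising from these simple building blocks (a Karp-style ratio computation over the finite weighted graph, with the positive-output zero-input cycles excluded precisely by the no-bad-configuration hypothesis). Taking $K_{\tran}$ to be this maximum yields a value computable from $\tran,\diffI,\diffO$ such that whenever the supremum ratio is finite it is already at most $K_{\tran}$; consequently $\tran$ is robust iff it is $K_{\tran}$-robust. This does not contradict the undecidability of fixed-$K$ robustness from \thmref{funcundec}: we never decide $K_{\tran}$-robustness directly, rather the existential question ``is there some $K$'' is answered by the \conp\ procedure, and $K_{\tran}$ merely certifies how large the witnessing constant can be.
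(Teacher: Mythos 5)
Your overall architecture --- a squared product, zero-input-cost cycles as the sole source of non-robustness, an \np{} certificate for a bad cycle, and an explicit constant extracted when none exists --- is the same as the paper's, which works with a trim pair transducer $\Pair_\tran$ and \lemref{suffnonrob}. However, your key lemma has a genuine gap in the case where the cycle does not change the delay. You declare a configuration bad when ``the net output cost per traversal is positive'', where that cost is what your machine $\mathcal{P}$ accumulates during one traversal from the entry configuration. That quantity depends on the buffered, not-yet-matched output letters, and only the \emph{first} traversal pays it: from the second traversal on, each copy's cycle output is matched against a shifted copy of the other's cycle output, so the per-traversal cost stabilizes to a value depending only on the cycle outputs and the delay. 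The two can differ. Concretely, take the deterministic $\tran$ with transitions $q_0 \xrightarrow{x/\es} q_1$, $q_0 \xrightarrow{y/b} q_1'$, loops $q_1 \xrightarrow{z/ab} q_1$ and $q_1' \xrightarrow{z/ba} q_1'$, all states accepting. On the pair $(xz^n,yz^n)$ the cycle pair is entered with delay $1$ and buffer $b$; the first traversal costs $1$ (the emitted $a$ is matched against the buffered $b$), so your certificate declares non-robustness, yet $\dO(\funcDefinedBy{\tran}(xz^n),\funcDefinedBy{\tran}(yz^n)) = 2$ for all $n$, because $ab$ and $ba$ are rotations of each other with offset equal to the delay; $\tran$ is robust, and your case~(ii) pumping (``repeatedly mismatched matched output'') fails exactly here. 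The paper's condition~1 in \lemref{suffnonrob} is engineered to avoid this: a zero-input-distance cycle is bad iff its output words are \emph{not} rotations of each other, or they are rotations with gap parameters $(r,s)$ but some prefix reaches the cycle with output-length difference outside $\{r,s\}$ --- a condition on the cycle outputs and the delay alone, deliberately independent of buffer contents.

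A second, smaller problem: in case~(i) you truncate $\mathcal{P}$ to bounded delay and call the result ``a genuinely finite weighted structure'', but your own bound on the delay is $c\,(\dI+1)$, which is unbounded over the pairs you must account for; no fixed truncation captures them all, so the Karp-style ratio computation in part~2 is not over a finite graph as claimed. The paper sidesteps both issues at once: $\Pair_\tran$ carries output \emph{words} on its transitions (there is no letter-matching buffer), and part~2 is proved by induction on the number of cycles of a path using uniform per-block bounds $K_{cyc}$ and $K_{acyc}$, yielding $K_{\tran} = |\Pair_\tran|^2$. Your plan can be repaired by (a)~replacing the entry-configuration per-traversal cost with the stabilized, rotation-aligned cost (equivalently, the paper's rotation/gap-parameter condition), and (b)~replacing the finite-truncation argument with uniform bounds on the cost of positive-input-penalty cycles and acyclic segments.
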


\vspace{0.05in}
\noindent {\em Proof sketch}. 
Given $\tran$, one can easily construct a {\em trim}\footnote{$\ptran$ is trim
if every state in $\ptran$ is reachable from the initial state and some final
state is reachable from every state in $\ptran$.} functional transducer
$\ptran$ such that $\funcDefinedBy{\ptran}(\s,\str) = (\s',\str')$ iff
$\funcDefinedBy{\tran}(\s) = \s'$ and $\funcDefinedBy{\tran}(\str) = \str'$. We
show that $\tran$ is not robust w.r.t. generalized Manhattan distances iff there
exists some cycle in $\ptran$ satisfying certain properties.  Checking 
the existence of such a cycle is in \np. If such a cycle exists, one
can construct paths in $\ptran$ through the cycle, labeled with input words
$(\s, \str)$ and output words $(\s',\str')$, with ${\dO}(\s',\str') >
K{\dI}(\s,\str)$ for {\em any} $K$.
Conversely, if there exists no such cycle,
one can compute $K_{\tran}$ such that $\tran$ is $K_{\tran}$-robust. It follows
that one can compute $K_{\tran}$ such that $\tran$ is robust iff $\tran$ is
$K_{\tran}$-robust.

%
%

\subsection{Beyond Synchronized Transducers}

In this section, we present an approach for natural extensions of 
\thmref{decidable.regular}.\\ 


\noindent {\bf Isometry approach.} We say that a transducer $\trans$ is a \emph{$(\dIi, \dOo)$-isometry} if and only if
for all $\s, \str \in \dom{\trans}$ we have
$\dIi(\s,\str) = \dOo(\funcDefinedBy{\trans}(\s),\funcDefinedBy{\trans}(\str))$.

\begin{proposition}
\label{prop:compositions}
Let $\trans, \trans'$ be functional transducers 
with $\funcDefinedBy{\trans} \subseteq \alphI^\omega \times \alphO^\omega$ and
$\funcDefinedBy{\trans'} \subseteq \alphIi^\omega \times \alphOo^\omega$.
Assume that there exist transducers $\trans^I$ and $\trans^O$ such that 
$\trans^I$ is a $(\dI, \dIi)$-isometry, $\trans^O$  is a $(\dOo, \dO)$-isometry and 
$\funcDefinedBy{\trans} = \funcDefinedBy{\trans^O \circ (\trans' \circ \trans^I)}$.
Then, for every $K>0$,  $\trans$ is $K$-robust w.r.t. $\dI, \dO$ if and only if
$\trans'$ is $K$-robust w.r.t. $\dIi, \dOo$.
\end{proposition}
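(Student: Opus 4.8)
The plan is to transport the $K$-robustness inequality back and forth along the two isometries, so that each robustness constraint of $\trans$ corresponds to exactly one robustness constraint of $\trans'$. First I would fix notation: write $g = \funcDefinedBy{\trans^I}$ and $f = \funcDefinedBy{\trans^O}$, so that $\funcDefinedBy{\trans}(\s) = f(\funcDefinedBy{\trans'}(g(\s)))$ for every $\s \in \dom{\trans}$. Before any estimation I would pin down the domains: since composition leaves the domain of the innermost transducer unchanged, $\dom{\trans} = \dom{\trans^I}$, and for the nested compositions to be well defined we automatically have $g(\s) \in \dom{\trans'}$ and $\funcDefinedBy{\trans'}(g(\s)) \in \dom{\trans^O}$ for each such $\s$.

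The heart of the proof is a pair of identities. I would fix $\s, \str \in \dom{\trans}$ and set $\sigma = g(\s)$, $\tau = g(\str)$, which lie in $\dom{\trans'}$. Because $\trans^I$ is a $(\dI, \dIi)$-isometry, $\dI(\s, \str) = \dIi(\sigma, \tau)$, so in particular one side is finite exactly when the other is. Because $\funcDefinedBy{\trans'}(\sigma), \funcDefinedBy{\trans'}(\tau) \in \dom{\trans^O}$ and $\trans^O$ is a $(\dOo, \dO)$-isometry,
\[
\dOo(\funcDefinedBy{\trans'}(\sigma), \funcDefinedBy{\trans'}(\tau)) = \dO\big(f(\funcDefinedBy{\trans'}(\sigma)), f(\funcDefinedBy{\trans'}(\tau))\big) = \dO(\funcDefinedBy{\trans}(\s), \funcDefinedBy{\trans}(\str)).
\]
Combining the two, the constraint $\dO(\funcDefinedBy{\trans}(\s), \funcDefinedBy{\trans}(\str)) \le K\,\dI(\s,\str)$ holds if and only if $\dOo(\funcDefinedBy{\trans'}(\sigma), \funcDefinedBy{\trans'}(\tau)) \le K\,\dIi(\sigma, \tau)$.

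The forward implication then drops out. If $\trans'$ is $K$-robust w.r.t. $\dIi, \dOo$, I would take any $\s, \str \in \dom{\trans}$ with $\dI(\s, \str) < \infty$; the associated $\sigma, \tau$ satisfy $\dIi(\sigma, \tau) < \infty$, so the $\trans'$-constraint holds by hypothesis, and the equivalence above delivers the $\trans$-constraint. Hence $\trans$ is $K$-robust. Note that this direction uses nothing beyond the two isometry identities.

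The main obstacle is the converse. To deduce $K$-robustness of $\trans'$ from that of $\trans$, I would need, for an \emph{arbitrary} constraint pair $\sigma, \tau \in \dom{\trans'}$ with $\dIi(\sigma, \tau) < \infty$, a witnessing pair $\s, \str \in \dom{\trans}$ with $g(\s) = \sigma$ and $g(\str) = \tau$; the equivalence would then push the $\trans$-constraint down to $\trans'$. This amounts to requiring that $g$ map $\dom{\trans}$ \emph{onto} $\dom{\trans'}$ (and symmetrically that $\trans^O$ cover the outputs). Since the similarity functions here need not be metrics, distance preservation alone forces neither injectivity nor surjectivity, so this coverage is a genuine requirement rather than an automatic consequence. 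I would discharge it by reading ``applying isometries to the input and output'' as taking $\trans^I, \trans^O$ to be \emph{bijective} isometries: their inverses $\trans^{I,-1}$ (a $(\dIi, \dI)$-isometry) and $\trans^{O,-1}$ (a $(\dO, \dOo)$-isometry) are again isometries, and they satisfy $\funcDefinedBy{\trans'} = \funcDefinedBy{\trans^{O,-1} \circ \trans \circ \trans^{I,-1}}$. Applying the forward implication already proved, now with the roles of $\trans$ and $\trans'$ interchanged, yields $K$-robustness of $\trans'$ and completes the equivalence.
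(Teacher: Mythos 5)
Your transport argument is sound, and there is in fact nothing in the paper to compare it against: the paper states this proposition without any proof, treating it as an observation justified only by the two examples that follow it. Your forward direction (robustness of $\trans'$ implies robustness of $\trans$) is exactly the intended argument --- pull each constraint pair $\s,\str \in \dom{\trans}$ through the two isometry identities. More importantly, your worry about the converse is genuine, and you are right that coverage is not a consequence of the stated hypotheses: the composition hypothesis only forces $\funcDefinedBy{\trans^I}$ to map $\dom{\trans}$ \emph{into} $\dom{\trans'}$, and the isometry condition quantifies only over $\dom{\trans^I}$, so nothing guarantees that every constraint pair of $\trans'$ is witnessed by a pair from $\dom{\trans}$. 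Indeed, as literally stated the proposition is false. Take $\trans'$ to be the machine $\tran_{NR}$ from the introduction run on $\omega$-words (not $K$-robust for any $K$, witnessed by $a^\omega, ba^\omega$), let $\trans^I$ be the identity transducer restricted to the single word $a^\omega$, let $\trans^O$ be the identity transducer on $\{a,b\}^\omega$, and let all four similarity functions be the Manhattan distance. Then $\trans^I$ and $\trans^O$ are (trivially) isometries, and any $\trans$ with $\funcDefinedBy{\trans} = \funcDefinedBy{\trans^O \circ (\trans' \circ \trans^I)}$ has the singleton domain $\{a^\omega\}$ and is therefore $K$-robust for every $K$, while $\trans'$ is not. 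So the ``only if'' direction needs exactly the coverage hypothesis you isolate, and your repair --- requiring $\funcDefinedBy{\trans^I}$ to map $\dom{\trans}$ onto $\dom{\trans'}$ --- is the right one; it does hold in both of the paper's applications (stuttering pruning maps $\dom{\trans}$ onto its stuttering-free words, and the identity maps $\dom{\trans}$ onto itself), which is presumably why the authors did not notice the missing hypothesis.

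Two refinements to your fix. First, only input-side coverage is needed: once every $\sigma \in \dom{\trans'}$ has a preimage in $\dom{\trans}$, the words $\funcDefinedBy{\trans'}(\sigma)$ lie in $\dom{\trans^O}$ automatically by well-definedness of the composition, so your parenthetical symmetric requirement on $\trans^O$ is redundant. Second, your last step invokes inverse \emph{transducers} $\trans^{I,-1}$ and $\trans^{O,-1}$, but the inverse of an injective FST-computable function need not itself be FST-computable (certainly not in this paper's format, which reads exactly one input letter per transition and forbids $\eps$-transitions). This is harmless, because the inverses are never needed as machines: given surjectivity, for $\sigma,\tau \in \dom{\trans'}$ with $\dIi(\sigma,\tau) < \infty$ pick any preimages $\s,\str \in \dom{\trans}$ and apply your two displayed identities directly; this finishes the converse purely at the level of functions.
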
 

\begin{example}[Stuttering]
\newcommand{\transPi}{\trans^{\pi}}
\newcommand{\transS}{\trans^{S}}
\newcommand{\transI}{\trans^{I}}
For a given word $w$ we define the \emph{stuttering pruned} word $\stutter(w)$ as the result 
of removing from $w$ letters that are the same as the previous letter.
E.g. $\stutter(\mathbf{ba}aa\mathbf{c}c\mathbf{a}aa\mathbf{b}) = bacab$. 

Consider a transducer $\trans$ and a similarity function $\dI$ over finite words that are \emph{stuttering invariant}, \ie, for all $\s,\str \in \dom{\trans}$, if $\stutter(\s) = \stutter(\str)$, then
$\funcDefinedBy{\trans}(\s) = \funcDefinedBy{\trans}(\str)$
and for every $\sStr \in \alphI^*$, $\dI(\s,\sStr) = \dI(\str,\sStr)$.
In addition, we assume that for every $\s \in \dom{\trans}$,
$|\funcDefinedBy{\trans}(\s)| = |\stutter(\s)|$.

Observe that these assumptions imply that:
(1)~the projection transducer $\transPi$ defined such that
$\funcDefinedBy{\s} = \stutter(\s)$ is a $(\dI, \dI)$-isometry,
(2)~the transducer $\transS$ obtained by restricting the domain of $\trans$ 
to \emph{stuttering-free} words, \ie, the set $\{ w \in \dom{\trans} : \stutter(w) = w \}$,
is a synchronized transducer\footnote{Note that any functional transducer $\trans$ with the property:   
for every $\s \in \dom{\trans}$, $|\funcDefinedBy{\trans}(\s)| = |\s|$, is a synchronized transducer.}, 
 and 
(3)~$\funcDefinedBy{\trans} = \funcDefinedBy{\transI \circ (\transS \circ \transPi)}$, where
$\transI$ defines the identity function over $\alphO^*$.
Therefore, by \propref{compositions}, in order to check $K$-robustness of $\trans$, it suffices 
to check $K$-robustness of $\transS$. Since $\transS$ is a synchronized transducer, 
$K$-robustness of $\transS$ can be effectively checked,
provided the similarity functions $\dI, \dO$ satisfy the conditions of \thmref{decidable.regular}.
\end{example}

\begin{example}[Letter-to-multiple-letters transducers]
\newcommand{\transPair}{\trans^{\textrm{pair}}}
\newcommand{\transD}{\trans^{D}}
\newcommand{\transI}{\trans^{I}}
Consider a transducer $\trans$   
which on every transition outputs a $2$-letter 
word\footnote{One can easily generalize this example to any fixed number.}.
Although, $\trans$ is not synchronized, it can be transformed to 
a letter-to-letter transducer $\transD$, whose output alphabet is $\alphO \times \alphO$.
The transducer $\transD$ is obtained from $\trans$ by substituting each output word $ab$ to a single letter $\lpair{a}{b}$
from $\alphO \times \alphO$. 
We can use $\transD$ to decide $K$-robustness of $\trans$ in the following way.
First, we define transducers $\transI, \transPair$ such that
$\transI$ computes the identity function over $\alphI^\omega$ and
$\transPair$ is a transducer representing the function 
$\funcDefinedBy{\transPair}(\lpair{a_1}{b_1}\lpair{a_2}{b_2}\ldots) = a_1 b_1 a_2 b_2 \ldots$.
Observe that $\funcDefinedBy{\trans} = \funcDefinedBy{\transPair \circ (\transD \circ \transI)}$.
Second, we define $\dO^D$ as follows: $\forall \s,\str \in (\alphI\times\alphI)^\omega$,
$\dO^D(\s,\str) = \dO(\funcDefinedBy{\transPair}(\s), \funcDefinedBy{\transPair}(\str))$.
Observe that  $\transI$ is a $(\dI,\dI)$-isometry and 
$\transPair$ is a $(\dO^D,\dO)$-isometry.
Thus, $K$-robustness of $\trans$ w.r.t. $\dI, \dO$ reduces to
$K$-robustness of the letter-to-letter transducer $\transD$ w.r.t.
$\dI, \dO^D$, which can be effectively checked (\thmref{decidable.regular}).
\end{example}


\section{Nondeterministic Transducers}\label{sec:nondeter}
Let $\tran$ be a nondeterministic transducer with $\funcDefinedBy{\tran}
\subseteq \alphI^\omega \times \alphO^\omega$. Let $\dI$ be an automatic
similarity function for computing the similarity between input words in
$\alphI^*$. As explained in \secref{probdef}, the definition of $K$-robust
nondeterministic transducers 
involves set-similarity functions that
can compute the similarity between sets of output words in $\alphO^\omega$. In
this section, we examine the $K$-robustness problem of $\tran$
w.r.t. $\dI$ and three classes of such set-similarity functions. 

Let $\dO$ be an automatic similarity function for computing the similarity
between output words in $\alphO^\omega$. We first define three
set-similarity functions induced by $\dO$.
\begin{definition}
Given sets $A, B$ of words in $\alphO^\omega$, we consider the following set-similarity functions 
induced by $\dO$:
\begin{enumerate}[(i)]
\item Hausdorff set-similarity function $\dsetO^{H}(A,B)$ induced by $\dO$:
\label{def:dHaus}
\[
\dsetO^{H}(A,B) = max \{\, \sup\nolimits_{\s \in A} \, \inf\nolimits_{\str \in B} \dO(\s,\str), 
			   \sup\nolimits_{ \s \in B} \, \inf\nolimits_{\str \in A} \dO(\s,\str) \, \}
\]
\item Inf-inf set-similarity function $\dsetO^{\inf}(A,B)$ induced by $\dO$:
\label{def:dinf}
\[
\dsetO^{\inf}(A,B) = \inf\nolimits_{\s \in A} \inf\nolimits_{\str \in B} \dO(\s,\str)
\]
\item Sup-sup set-similarity function $\dsetO^{\sup}(A,B)$ induced by $\dO$:
\label{def:dsup}
\[
\dsetO^{\sup}(A,B) = \sup\nolimits_{\s \in A} \sup\nolimits_{\str \in B} \dO(\s,\str)
\]
\end{enumerate}
\end{definition}

\noindent Of the above set-similarity functions, only the Hausdorff
set-similarity function is a distance function (if $\dO$ is a distance function). 

Note that when $\tran$ is a functional transducer, each 
set-similarity function above reduces to $\dO$.  Hence, $K$-robustness of a
functional transducer $\tran$ w.r.t. $\dI$, $\dsetO$ and $K$-robustness of
$\tran$ w.r.t. $\dI$, $\dO$ coincide.  As $K$-robustness of functional
transducers in undecidable (\thmref{funcundec}), $K$-robustness of
nondeterministic transducers w.r.t. the above set-similarity functions is
undecidable as well.

%
Recall from \thmref{decidable.regular} that $K$-robustness of a
synchronized (functional) transducer is decidable w.r.t. certain automatic
similarity functions.  In particular, $K$-robustness of Mealy machines
is decidable when $\dI$, $\dO$ are generalized Manhattan distances.  
In contrast, $K$-robustness of nondeterministic letter-to-letter
transducers is undecidable w.r.t. the Hausdorff 
and Inf-inf set-similarity functions even
when $\dI$, $\dO$ are generalized Manhattan distances.  
Among the above defined set-similarity functions, $K$-robustness of
nondeterministic transducers is decidable only w.r.t. the Sup-sup
set-similarity function.

\begin{restatable}{theorem}{RobustnessdNon}
\label{thm:HausUndecidable}
Let $\dI$, $\dO$ be computed by functional weighted-automata. 
Checking $K$-robustness of nondeterministic letter-to-letter transducers w.r.t. $\dI$, 
$\dsetO$ induced by $\dO$ is  
\begin{compactenum}[(i)]
\item undecidable if $\dsetO$ is the Hausdorff set-similarity function,
\item undecidable if $\dsetO$ is the Inf-inf set-similarity function, and
\item decidable  if $\dsetO$ is the Sup-sup set-similarity function and $\dI, \dO$ 
satisfy the conditions of \thmref{decidable.regular}.
\end{compactenum}
\end{restatable}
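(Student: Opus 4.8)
For part (iii), I would exploit the fact that a letter-to-letter transducer keeps input and output perfectly aligned, so that for equal-length inputs $\s,\str$ every pair of accepting runs yields outputs $\out,\outt$ of the same length. The plan is to build a single $f$-\wa $\aut$ over $\alphI \otimes \alphI$ that simultaneously (a) guesses two accepting runs of $\tran$, one on $\s$ producing $\out$ and one on $\str$ producing $\outt$; (b) runs the functional automata $\aut_{\dI}$ on $\s \otimes \str$ and $\aut_{\dO}$ on $\out \otimes \outt$; and (c) restricts to the case $\dI(\s,\str) < \infty$ by keeping only runs along which $\aut_{\dI}$ accepts (so $\dI=\infty$ yields an empty run set and value $\infty$, harmlessly). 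I would weight each transition by $K\cost_{\dI} - \cost_{\dO}$, the weights inherited from $\aut_{\dI}$ and $\aut_{\dO}$; because each $f \in \valFun$ is linear in the weights and $\aut_{\dI},\aut_{\dO}$ are functional, $f$ of a run equals $K\dI(\s,\str) - \dO(\out,\outt)$. Since $\valueL{\aut}$ takes the infimum over runs, $\valueL{\aut}(\s \otimes \str) = \inf_{\out,\outt}\bigl(K\dI(\s,\str) - \dO(\out,\outt)\bigr) = K\dI(\s,\str) - \dsetO^{\sup}(\funcDefinedBy{\tran}(\s),\funcDefinedBy{\tran}(\str))$. Hence $\tran$ is \emph{not} $K$-robust w.r.t. $\dsetO^{\sup}$ exactly when $\valueL{\aut}(\s\otimes\str) < 0$ for some word, i.e.\ when $\aut$ fails emptiness at threshold $0$; by \lemref{oldResults}(1) this is decidable in polynomial time.

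For parts (i) and (ii) I would reduce from the universality problem for $\fsum$-\wa with weights in $\{-1,0,1\}$, undecidable by \lemref{oldResults}(2). The functional special case is unavailable here, since a letter-to-letter functional transducer is synchronized and so already decidable via \thmref{decidable.regular}; the undecidability must come from genuine nondeterminism. Given such an automaton $\aut$, I would construct a nondeterministic letter-to-letter transducer $\tran$ whose branches simulate the runs of $\aut$: a branch following a transition of weight $w$ emits a symbol $\underline{w}$. Taking $\dO$ to be a generalized Manhattan distance with $\diffO(\underline{w},\underline{w'}) = |w - w'|$ and reference word $p = \underline{-1}\,\underline{-1}\cdots$, the encoding $\out_\pi$ of a run $\pi$ satisfies $\dO(p, \out_\pi) = \sum_i (w_i + 1) = \fsum(\pi) + |\s|$. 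To cancel the $+|\s|$ shift I would use inputs over $\Sigma_{\aut} \times \{0,1\}$, whose second track is ignored by the simulation: two inputs $\s_0,\s_1$ carrying the same $\Sigma_{\aut}$-sequence $\s$ but constant second track $0$, resp.\ $1$, satisfy $\dI(\s_0,\s_1) = |\s|$ under a generalized Manhattan $\dI$. For the Inf-inf case I would arrange, via end-markers reading the second track, that $\funcDefinedBy{\tran}(\s_0) = \{p\}$ while $\funcDefinedBy{\tran}(\s_1) = \{\out_\pi : \pi\}$, so that $\dsetO^{\inf} = \inf_\pi \dO(p,\out_\pi) = \valueL{\aut}(\s) + |\s|$; with $K = 1$, $K$-robustness reads $\valueL{\aut}(\s) \le 0$ for all $\s$, which is exactly universality.

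For the Hausdorff case the subtlety is that the outer maximum makes $\dsetO^{H}$ behave like a supremum whenever one argument is a singleton, so the infimum of run-values cannot be isolated that way. The trick I would use is set containment: arrange $\funcDefinedBy{\tran}(\s_1) = \{\out_\pi : \pi\}$ and $\funcDefinedBy{\tran}(\s_0) = \{\out_\pi : \pi\} \cup \{p\}$, i.e.\ $B \subseteq A = B \cup \{p\}$. Then the $B \to A$ directional term vanishes, since every point of $B$ lies in $A$, while the $A \to B$ term equals $\inf_\pi \dO(p,\out_\pi)$; hence $\dsetO^{H}(A,B) = \inf_\pi \dO(p,\out_\pi) = \valueL{\aut}(\s) + |\s|$, recovering the same infimum of run-values as the Inf-inf case, and $K$-robustness again coincides with universality.

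The main obstacle is this Hausdorff encoding: because Hausdorff distance superficially resembles a supremum, one must recognize that the directional sup-inf still exposes a single isolated point, and engineer the transducer so that the two output sets differ by \emph{exactly} the reference word $p$ — using the second input track only to toggle acceptance of the $p$-branch at the end-marker, while leaving the run-simulation, and hence the common outputs $\{\out_\pi : \pi\}$, untouched. Verifying that the toggling does not perturb the simulated runs, that the $+|\s|$ shift is cancelled by the $\dI$-track in both reductions, and that threshold $0$ with non-strict inequality faithfully matches the universality threshold are the delicate bookkeeping steps; the decidable direction (iii) is comparatively routine once one observes that linearity of each $f \in \valFun$ turns $K\dI - \dO$ into a single $f$-\wa and that the infimum semantics matches $\dsetO^{\sup}$ after negation.
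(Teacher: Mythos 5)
Your reductions for (i) and (ii) start from the same source as the paper's (universality of $\fsum$-automata with weights $\{-1,0,1\}$, \lemref{oldResults}), and your set-containment trick for Hausdorff --- arranging $B \subseteq A = B \cup \{p\}$ so that $\dsetO^{H}(A,B)$ collapses to $\inf_\pi \dO(p,\out_\pi)$ --- is a genuinely different device from the paper's, which tags outputs with a polarity, charges cross-polarity positions $\diffO = c+e$, and needs a WLOG run of value $|\s|$ to pin the outer supremum (your trick needs only nonemptiness of the run set). However, as described, your constructions break on inputs whose $\{0,1\}$-track is \emph{not} constant. Since you leave the run-simulation branch ``untouched'' by the second track, a word $\s_m$ agreeing with $\s_0$ except in a single track position lies in $\dom{\tran}$ with output set $\{\out_\pi : \pi\}$; the pair $(\s_0,\s_m)$ then has $\dI(\s_0,\s_m)=1$ while $\dsetO^{H}\bigl(\funcDefinedBy{\tran}(\s_0),\funcDefinedBy{\tran}(\s_m)\bigr) = \valueL{\aut}(\s)+|\s|$, so $1$-robustness forces $\valueL{\aut}(\s)+|\s| \le 1$. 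A universal automaton with $\valueL{\aut}(\s)=0$ for all $\s$ thus yields a transducer that is \emph{not} $1$-robust, and the direction ``universality implies robustness'' of your equivalence fails; the same threat applies to your Inf-inf construction if mixed tracks reach the domain. The repair is precisely the care the paper builds into its domains (constant-polarity inputs $\s\otimes\theta$ with $\theta \in \{\bot^{|\s|},\top^{|\s|}\}$, resp.\ $\bot^* \cup \{a,b\}^*$): \emph{every} branch, including the run simulation, must reject non-constant tracks, so the only finite-$\dI$ pairs are $(\s_0,\s_0)$, $(\s_1,\s_1)$, $(\s_0,\s_1)$. You also silently assume every word has an accepting run of $\aut$ (otherwise $\s_1 \notin \dom{\tran}$ and non-universal instances can map to vacuously robust transducers); the paper's WLOG supplies this and your proof should state it.

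For part (iii), your product construction is close in spirit to the paper's but has one false step: the claim that ``each $f \in \valFun$ is linear in the weights'' fails for $\flimavg$, since $\limsup$ does not distribute over differences, so the value of a combined run need not equal $K\dI(\s,\str) - \dO(\out,\outt)$. The paper's proof of \thmref{decidable.regular} handles $\flimavg$ by a separate two-sided argument (one inequality from superadditivity of $\limsup$, the converse restricted to lasso runs, whose partial averages converge), and your argument must import that case split. Alternatively, take the paper's route, which avoids re-proving anything: encode the nondeterministic choices into an extended input alphabet $\alphI \times \Lambda$, obtaining a \emph{deterministic} letter-to-letter transducer $\trans^e$ whose outputs over all choice-annotations of $\s$ form exactly $\funcDefinedBy{\trans}(\s)$, define $\dI^e$ to ignore the $\Lambda$-track, observe that $K$-robustness of $\trans$ w.r.t. $\dI,\dsetO^{\sup}$ coincides with $K$-robustness of $\trans^e$ w.r.t. $\dI^e,\dO$, and cite \thmref{decidable.regular} directly.
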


\begin{proof} {\bf [of (iii)]}
We can encode nondeterministic choices of $\trans$, with $\funcDefinedBy{\tran} \subseteq \alphI^\omega \times \alphO^\omega$,
in an extended input alphabet $\alphI \times \Lambda$. We construct a deterministic transducer $\trans^e$ such that 
for every $\s \in \alphI^\omega$, 
$\{ \funcDefinedBy{\trans^{e}}(\lpair{\s}{\lambda}) : \lpair{\s}{\lambda} \in \dom{\trans^{e}} \} = 
\funcDefinedBy{\trans}(\s)$.
We  also define $\dI^e$ such that for all $\lpair{\s}{\lambda_1}, \lpair{\str}{\lambda_2} \in (\alphI \times \Lambda)^\omega$,
$\dI^e(\lpair{\s}{\lambda_1}, \lpair{\str}{\lambda_2}) = \dI(\s,\str)$.
Then, $\trans$ is $K$-robust w.r.t. $\dI, \dsetO^{\sup}$ induced by $\dO$ 
iff $\trans^e$ is $K$-robust w.r.t. $\dI^e, \dO$.
Indeed, a nondeterministic transducer $\trans$ is $K$-robust
w.r.t. $\dI$, $\dsetO^{\sup}$ induced by $\dO$ 
iff for all input words $\s,\str \in \dom{\trans}$ and for
all outputs $\s' \in \funcDefinedBy{\trans}(\s), \str' \in \funcDefinedBy{\trans}(\str)$,
$\dI(\s,\str) < \infty$ implies $\dO(\s',\str') \leq K \dI(\s,\str)$. 
\end{proof}

\section{Related Work}\label{sec:related}

In early work  \cite{RTSS09}, \cite{CGL10,CGLN11} on continuity and robustness
analysis, the focus is on software programs manipulating numbers.  In
\cite{RTSS09}, the authors compute the maximum deviation of a program's output
given the maximum possible perturbation in a program input. In \cite{CGL10},
the authors formalize $\epsilon-\delta$ continuity of programs and present
sound proof rules to prove continuity of programs. In \cite{CGLN11}, the
authors formalize robustness of programs as Lipschitz continuity and present a
sound program analysis for robustness verification. While arrays of numbers are
considered in \cite{CGLN11}, the size of an array is immutable. 
 
More recent papers have aimed to develop a notion of robustness for
reactive systems. In \cite{TBCSM12}, the authors present
polynomial-time algorithms for the analysis and synthesis of robust
transducers. Their notion of robustness is one of input-output
stability, that bounds the output deviation from
disturbance-free behaviour under bounded disturbance, as well as the
persistence of the effect of a sporadic disturbance.
Their distances are measured using cost functions that 
map {\em each} string to a nonnegative integer. 
In \cite{MRT13,CHR10,BGHJ09}, the authors develop
different notions of robustness for reactive systems, with
$\omega$-regular specifications, interacting with uncertain
environments. In \cite{DHLN10}, the authors present a polynomial-time
algorithm to decide robustness of sequential circuits modeled as Mealy
machines, w.r.t. a {\em common suffix distance} metric.  Their notion
of robustness also bounds the persistence of the effect of a 
sporadic disturbance.

Recent work in \cite{SDC13} and \cite{foundation} formalized and studied
robustness of systems modeled using transducers, in the presence of bounded
perturbation. The work in \cite{SDC13} focussed on the outputs of synchronous
networks of Mealy machines in the presence of channel perturbation. 
The work in \cite{foundation} focussed on the outputs of functional transducers
in the presence of input perturbation. Both papers presented decision 
procedures for robustness verification w.r.t. specific distance functions
such as Manhattan and Levenshtein distances.

\section{Conclusion}\label{sec:conclusion}
In this paper, we studied the $K$-Lipschitz
robustness problem for finite-state transducers.
While the general problem is undecidable,
we identified decidability criteria that enable
reduction of $K$-robustness to the emptiness problem 
for weighted automata. 

In the future, we wish to extend our work in two directions. 
We plan to study robustness of
other computational models. We also wish to investigate 
synthesis of robust transducers. 

\bibliography{papers}

\newpage
\appendix
\section{Proofs}\label{app:proofs}
\RegularityIsDecidable*

\begin{proof}
We prove the result for transducers processing finite words. The proof 
for transducers processing infinite words is similar, but a bit more technical. 
 
The proof consists of two claims:

\noindent(1)~a functional transducer $\tran$ is synchronized iff
there are $B>0$ and a finite set of words $\Lfin$
such that for every word $\s = \s[1] \s[2] \dots \in \dom{\tran}$
there is an accepting run $(q_0, u_1)(q_1, u_2) \dots (q_{n-1}, u_n) (q_{n}, \epsilon)$ on $\s$
such that for every $i \in \{1,\ldots, |\s|\}$

(a)~ $ |u_1 u_2 \dots u_i| - i \leq B$, and

(b)~ if $i - |u_1 u_2 \dots u_i| > B$, then $u_{i+1} u_{i+1} \dots u_{|\s|} \in \Lfin$.

\noindent(2)~it is decidable in polynomial time whether such $B$ exists.
\smallskip

(1)~$\Leftarrow:$ Given $B$ and a finite language $\Lfin$ we can construct an 
automaton that simulates runs of $\trans$. The automaton implements a buffer of
size $B$ used to align input and output words. Due to assumptions on $\trans$,
the buffer will not overflow with output letters (cond. (a)), and if it overflows with
input letters, the remaining words belongs to a finite language (cond. (b), language $\Lfin$). 
In the latter case, once the overflow is detected, the automaton
can nondeterministically guess a word from $\Lfin$, and check correctness of that guess.

$\Rightarrow:$
Assume towards contradiction that such $B, \Lfin$ do not exist but
$\tran$ is synchronized, \ie, there exists an automaton $\aut$  satisfying:
for all $\s,\str$ we have $\funcDefinedBy{\tran}(\s) = \str$ iff $\s \otimes \str \in \lang_{\aut}$.
First suppose {(a)} is violated.
Then, there is a finite word $\s$ such that $|\funcDefinedBy{\tran}(\s)| - |s| > |\aut|$.
Consider an accepting run of $\aut$ on $\s \otimes \funcDefinedBy{\tran}(\s)$. 
That run can be pumped between positions $|\s|$  and $|\s| + |\aut|$ 
to get an accepting run on $\s \otimes u'$, where
$|u'| > |\funcDefinedBy{\tran}(\s)|$, which contradicts functionality of $\trans$.

Suppose (b) is violated.  Consider a state $\hat{q}$ of $\trans$ such that
there exist $k = |\aut| + 1$ inputs $\str_1, \ldots, \str_k$ corresponding to
$k$ pairwise different outputs $v_1, \ldots, v_k$, \ie, the transducer starting
in the state $\hat{q}$ upon reading $\str_i$, produces the word $v_i$.  Let $M
= max(|v_1|, \dots, |v_k|)$.  Suppose that there are $\s = \s[1] \dots \s[i]$
and a run $(q_0, u_1) \dots (q_{i-2}, u_{i-1}) (\hat{q}, u_i)$ such that  $|\s|
- |u_1 u_2 \dots u_i| > M$.  We have $\funcDefinedBy{\trans}(\s \str_1) = u
  v_1, \ldots, \funcDefinedBy{\trans}(\s \str_k) = u v_k$, where $u = u_1 u_2
\dots u_i$. Thus, $\aut$ accepts $\s \str_1 \otimes u v_1, \dots, \s \str_k
\otimes u v_k$.  As $k > |\aut|$ there are two different indices $i,j$ such
that $\aut$ is in the same state after reading $|\s|$-letter prefixes of $\s
\str_i \otimes u v_i$ and $\s \str_j \otimes u v_j$.  Since $|v_i|, |v_j| <
M$, $|u v_i|, |u v_j| < |\s|$.  Therefore, $\aut$ accepts $\s \str_i \otimes
u v_j$, which contradicts functionality of $\trans$.  \smallskip

(2)~Consider a functional transducer $\tran$. Let $Q'$ be a subset of states
of $Q$ consisting of states that are reachable from the initial state
and  from which some final states are reachable. The condition 
(a) holds if and only if there are no cycles in which the output word is longer than the input word.
The condition (b) holds if and only if all states reachable from cycles with empty output word
have output languages finite, \ie, once transducer enters such a state, there is a bounded number
of possible words it can output.
\end{proof}

\RobustnessIsDecidable*

\begin{proof}
Let $\aut_{\dI}$ 
be an $f$-weighted automaton computing the similarity function $\dI$.  Let $\aut_{\dO}$
be a functional $f$-weighted automaton
computing the similarity function $\dO$.  Let $\aut_{\tran}$
be the automaton corresponding to $\tran$ as defined in
\defref{regular}.
We define variants of $\aut_{\dI}, \aut_{\dO}$ and $\aut_{\tran}$ to enable these
automata to operate over a common alphabet $\Lambda = \alphI \otimes \alphI
\otimes \alphO \otimes \alphO$.
We obtain $\baut_{\dI}$ by replacing each
transition $(q, \lpair{a}{b},q')$ in $\aut_{\dI}$ with a set
$\{  (q, \langle a, b, a', b' \rangle ,q') : a',b' \in\alphO\}$ of transitions and setting the weight of each transition
$(q, \langle a, b, a', b' \rangle ,q')$ in this set to the weight of $(q, \lpair{a}{b},q')$.
Thus, the value of a run of $\baut_{\dI}$ on a word $\s
\otimes \str \otimes \s' \otimes \str'$ over $\Lambda$ equals the value of the
corresponding run (over the same sequence of states) of $\aut_{\dI}$ on word
$\s \otimes \str$.  This implies that the value of $\baut_{\dI}$ on
$\s \otimes \str \otimes \s' \otimes \str'$ is equal to the value of
$\aut_{\dI}$ on $\s \otimes \str$. In a similar way, we define automata
$\baut_{\dO}, \blaut_{\tran}, \braut_{\tran}$ on $\Lambda$
such that for all $\s,\str,\s',\str'$:
\begin{compactenum}
\item the value of $\baut_{\dO}$ on $s \otimes \str \otimes \s' \otimes \str'$
is equal to the value of $\aut_{\dO}$ on $\s' \otimes \str'$,
\item  $\blaut_{\tran}$ accepts $\s \otimes \str \otimes \s' \otimes \str'$
iff $\aut_{\tran}$ accepts $\s \otimes \s'$, and
\item $\braut_{\tran}$ accepts $\s \otimes \str \otimes \s' \otimes \str'$
iff $\aut_{\tran}$ accepts $\str \otimes \str'$.
\end{compactenum}

Let $\baut_{\dI}^{K}$, $\baut_{\dO}^{-1}$ are $f$-weighted automata obtained by
multiplying each transition weight of $\baut_{\dI}$, $\baut_{\dO}$ by $K$,
$-1$, respectively.
Consider the $f$-weighted automaton $\aut$ defined as $\baut_{\dI}^K \times
\blaut_{\tran} \times \braut_{\tran} \times \baut_{\dO}^{-1}$, the synchronized product
of automata 
$\baut_{\dI}^K, \blaut_{\tran}, \braut_{\tran}, \baut_{\dO}^{-1}$ where
the weight of each transition is equal the sum of the weights of the corresponding transitions
in $\baut_{\dI}^K$ and $\baut_{\dO}^{-1}$.

Now, we show that there exists a word with the value below $0$ assigned by $\aut$ 
iff $\trans$ is not $K$-robust w.r.t. $\dI, \dO$.
We perform case distinction: (1)~$f \in \{ \fsum, \fdiscF,\fdiscI\}$ and (2)~$f = \flimavg$.

In proofs of cases (1) and (2) we use the following notation.
Given sequences of real numbers $\sigma_1, \sigma_2$ and a real number $c$ 
we denote by 
$\sigma_1 + \sigma_2$ the component-wise sum of sequences $\sigma_1$ and $\sigma_2$, and
by $c \cdot \sigma_1$ the component-wise multiplication of $\sigma_1$ by $c$.

(1):~Let $f \in \{ \fsum, \fdiscF,\fdiscI\}$. 
Observe that for all sequences of real numbers $\sigma_1, \sigma_2$
and every real number $c$ we have 
(a)~$f(\sigma_1) + f(\sigma_2)$ equals $f(\sigma_1 + \sigma_2)$, and
(b)~$f(c \cdot \sigma_1) = c \cdot f(\sigma_1)$.

Consider words $\s,\str, \s',\str'$ such that $\aut$ accepts $\s \otimes \str \otimes \s' \otimes \str'$.
The conditions $(a), (b)$ imply that
$\lang_{\aut}(\s \otimes \str \otimes
\s' \otimes \str')$ equals $K\dI(\s,\str) + \inf_{\pi \in \Acc } -f(\pi)$,
where $\Acc$ is the set of accepting runs of $\aut_{\dO}$
on $\s' \otimes \str'$.
As $\aut_{\dO}$ is functional,
each run in $\Acc$ has the same value $\dO(\s', \str')$.
Thus, $\lang_{\aut}(\s \otimes \str \otimes
\s' \otimes \str')$ equals $K\dI(\s,\str) - \dO(\s', \str')$, and,
$\lang_{\aut}(\s \otimes \str \otimes
\s' \otimes \str') < 0$ implies $\tran$ is \emph{not} $K$-robust.
Conversely, if $\tran$ is not $K$-robust there are words $\s, \str$ such that
$\dO(\funcDefinedBy{\trans}(\s), \funcDefinedBy{\trans}(\str)) >
K\dI(\s,\str)$.  This implies $\aut$ accepts $\s \otimes \str \otimes
\funcDefinedBy{\trans}(\s) \otimes \funcDefinedBy{\trans}(\str)$ and
$\lang_{\aut}(\s \otimes \str \otimes \funcDefinedBy{\trans}(\s) \otimes
\funcDefinedBy{\trans}(\str)) < 0$.
Thus, nonemptiness of $\aut$ and $K$-robustness of $\tran$ w.r.t. $\dI, \dO$ coincide

(2):~Let $f = \flimavg$.
Observe that for sequences of real numbers $\sigma_1, \sigma_2$, 
$\limsup_{k\rightarrow \infty} K \cdot \sigma_1[k]- \sigma_2[k] \geq K \limsup_{k\rightarrow \infty} \sigma_1[k] -
\limsup_{k\rightarrow \infty} \sigma_2[k]$. 
Therefore, for sequences of real numbers $\sigma_1, \sigma_2$, $\flimavg(K \sigma_1 - \sigma_2) \leq K \flimavg(\sigma_1) - \flimavg(\sigma_2)$.
It follows that if $\aut$ assigns to every accepted word the value greater of equal to $0$,
$\trans$ is $K$-robust w.r.t. $\dI, \dO$.
Conversely, assume that $\aut$ has a run of the value below $0$. Then,
it has also a run $\pi$ on some word $\s \otimes \str \otimes \s' \otimes \str'$ 
 of the value below $0$ which is a lasso. 
The partial averages of weights is lasso converge, i.e. ,
$\limsup_{k\rightarrow \infty} \frac{1}{k}
\sum_{i=1}^{k} (\cost(\pi))[i] =
\liminf_{k\rightarrow \infty} \frac{1}{k}
\sum_{i=1}^{k} (\cost(\pi))[i]$.
Hence, $0 > \flimavg(\pi) = \flimavg(\pi_1) - \flimavg (\pi_2)$,
where $\pi_1$ (resp. $\pi_2$) is the projection of the run $\pi$ on states of 
$\baut_{\dI}^{K}$ (resp. $\baut_{\dO}^{-1}$). 
As $\flimavg(\pi_1) \geq K\dI(\s,\str)$
and $\flimavg(\pi_2) \geq \dO(\s',\str')$, we have 
$K\dI(\s,\str) < \dO(\s',\str')$.
It follows that if $\aut$ assigns to some word the value below $0$, $\trans$
is not $K$-robust w.r.t. $\dI, \dO$.
\qed
\end{proof}

\noindent Let $\Pair_\tran$ denote the trim functional transducer obtained from
$\tran$ such that $\funcDefinedBy{\ptran}(\s,\str) = (\s',\str')$ iff
$\funcDefinedBy{\tran}(\s) = \s'$ and $\funcDefinedBy{\tran}(\str) = \str'$.
Let $q_{0_\Pair}$ denote an initial state of $\Pair_\tran$ and $q_{F_\Pair}$
denote a final state of $\Pair_\tran$.  We say two words $\s$, $\str$ are {\em
rotations} of each other if $|\s| = |\str|$ and there exist words $u$ and $v$
such that $\s = u.v$ and $\str = v.u$; $(|u|,|v|)$ are referred to as the {\em
gap parameters}.  Note that words $\s$, $\str$ with $\s =\str$ are also
rotations of each other. 

\begin{lemma}\label{lem:suffnonrob}
Let $d$ be a Manhattan distance on $(\alphI \cup \alphO)^*$
and let $\tran$ be a transducer.
The following conditions are equivalent:
\begin{enumerate}
\item  there exists a cycle 
$c: (q_c,(\inp_c,\inpt_c),(\out_c,\outt_c),q_c)$ in 
$\Pair_\tran$ with $d(\inp_c,\inpt_c) = 0$, satisfying one of the following 
conditions:
\begin{enumerate}
\item $\out_c$, $\outt_c$ are not rotations of each other, or, 

\item $\out_c$, $\outt_c$ are rotations of each other with gap parameters 
($r,s$), and there exists a path of $\Pair$ through $c$:
$(q_{0_\Pair},(\inp_{pre},\inpt_{pre}),
(\out_{pre},\outt_{pre}),q_c)$ $.$ $c$
with $abs(|\outt_{pre}| - |\out_{pre}|) \neq r$ 
and $abs(|\outt_{pre}| - |\out_{pre}|) \neq s$,
\end{enumerate}
\item  $\tran$ is not robust w.r.t. Manhattan distances, and
\item $\tran$ is not $K^{\tran}$-robust w.r.t. Manhattan distances,
where $K^{\tran} = |\Pair_\tran|^2$.
\end{enumerate}
\end{lemma}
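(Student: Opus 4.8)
The plan is to establish the cycle of implications $(1)\Rightarrow(2)\Rightarrow(3)\Rightarrow(1)$, which gives the three-way equivalence. The step $(2)\Rightarrow(3)$ is immediate: if $\tran$ is not $K$-robust for any $K$, then in particular it is not $K^\tran$-robust. The two substantive directions are $(1)\Rightarrow(2)$, where a bad cycle is turned into a family of witnesses to non-robustness, and $(3)\Rightarrow(1)$, which is where the constant $K^\tran=|\Pair_\tran|^2$ is earned. Throughout I use that $\Pair_\tran$ is trim, so every state is both reachable and co-reachable, and that for the Manhattan distance $d(\inp_c,\inpt_c)=0$ means the two input components read along $c$ are the identical word.

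For $(1)\Rightarrow(2)$, I would build from the witnessing cycle $c$ a family of pairs $(\s_n,\str_n)\in\dom{\tran}$: follow a path into $q_c$, iterate $c$ exactly $n$ times, and then follow a path from $q_c$ to a final state. Since $d(\inp_c,\inpt_c)=0$, each extra iteration reads equal letters on both input components, so $\dI(\s_n,\str_n)$ is a constant independent of $n$; the task is to show $\dO(\funcDefinedBy{\tran}(\s_n),\funcDefinedBy{\tran}(\str_n))$ grows without bound. In sub-case (1)(a), if $\out_c,\outt_c$ have different lengths the output length gap grows linearly, producing $\eos$-mismatches; and if they have equal length but are not rotations then, for the fixed offset induced by the prefix, no cyclic shift of $\out_c$ equals $\outt_c$, so at least one position mismatches in every period. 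In sub-case (1)(b) I would take exactly the prefix realizing the bad offset $abs(|\outt_{pre}|-|\out_{pre}|)\notin\{r,s\}$: since the only offsets aligning the two periodic tails of a rotation pair with gap $(r,s)$ are $r$ and $s$, this offset leaves at least one mismatch per period. In every case $\dO=\Omega(n)$ while $\dI$ stays bounded, so $\dO/\dI\to\infty$ and no $K$ can work.

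The core is $(3)\Rightarrow(1)$, which I would prove by contraposition: assuming \emph{no} cycle satisfies condition (1), I show $\tran$ is $K^\tran$-robust. The key observation is that a cycle $c$ with $d(\inp_c,\inpt_c)=0$ that fails (1) is \emph{free}: by the negation of (1)(a) it is a rotation, so $|\out_c|=|\outt_c|$, and by the negation of (1)(b) every path through it has an aligning offset $r$ or $s$, so iterating $c$ adds zero to the output distance while preserving the relative length offset of the two output components. Hence deleting such a cycle from an accepting run of $\Pair_\tran$ on a pair $(\s,\str)$ changes neither $\dI(\s,\str)$ (equal input is read on both sides) nor $\dO(\funcDefinedBy{\tran}(\s),\funcDefinedBy{\tran}(\str))$ (the cycle is free and, because $|\out_c|=|\outt_c|$, every subsequent output position stays aligned).

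Finally, given $\s,\str\in\dom{\tran}$ with $\dI(\s,\str)<\infty$, I would repeatedly delete free cycles from the accepting run until none remains. Splitting the reduced run at the (at most $\dI(\s,\str)$) input positions where $\s$ and $\str$ disagree, each maximal agreeing block is an agreeing-input sub-run; were a block to span more than $|\Pair_\tran|$ states it would repeat a state and thus contain a deletable free cycle, contradicting minimality. Therefore the reduced run has length $O(|\Pair_\tran|\cdot\dI(\s,\str))$ (with the degenerate case $\s=\str$ giving $\dO=0$ by functionality), and a further factor of $|\Pair_\tran|$ bounding the output emitted along it yields $\dO\le|\Pair_\tran|^2\,\dI(\s,\str)$; since deletion preserved both distances, the bound transfers to the original pair. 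I expect the main obstacle to be precisely this last accounting: carefully verifying that free-cycle deletion preserves the position-wise output distance (which hinges on $|\out_c|=|\outt_c|$ keeping all later positions aligned) and extracting the exact constant $K^\tran=|\Pair_\tran|^2$.
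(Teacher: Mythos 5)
Your proposal follows essentially the same route as the paper's own proof: the same implication cycle $(1)\Rightarrow(2)\Rightarrow(3)\Rightarrow(1)$, pumping a witnessing cycle for $(1)\Rightarrow(2)$, and contraposition for $(3)\Rightarrow(1)$ via removal of ``free'' cycles (equal inputs, rotation outputs aligned by the path's offset) followed by a length bound on what remains, differing from the paper only in bookkeeping --- you delete all free cycles and count the agreeing blocks between input mismatches, while the paper inducts on the number of cycles using the constants $K_{cyc}$ and $K_{acyc}$. The step you flag as the main obstacle --- that deleting a free cycle preserves $\dO$ exactly, not just up to boundary effects where the cycle's output meets the surrounding words --- is precisely the equality the paper asserts, with no further justification, in its induction step, so your attempt matches the paper's proof in both structure and level of detail.
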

\begin{proof}
\noindent{\em (i) $\implies$ (ii)}:
Suppose there exists a cycle $c: (q_c,(\inp_c,\inpt_c),(\out_c,\outt_c),q_c)$
in $\Pair$ with $d(\inp_c,\inpt_c) = 0$ satisfying condition $1$ or $2$ in \lemref{suffnonrob}.
If $c$ satisfies condition $1$, let $\pi$ denote any path 
$\pi_{pre}$ $.$ $c$ in $\Pair$ through $c$, with $\pi_{pre}$ given by 
$(q_{0_\Pair},(\inp_{pre},\inpt_{pre}),(\out_{pre},\outt_{pre}),q_c)$.
If $c$ satisfies condition $2$, let $\pi:$ $\pi_{pre}$ $.$ $c$, with 
$\pi_{pre}:$ $(q_{0_\Pair},(\inp_{pre},\inpt_{pre}),(\out_{pre},\outt_{pre}),q_c)$, 
be a path such that $abs(|\outt_{pre}| - |\out_{pre}|) \neq r$ and
$abs(|\outt_{pre}| - |\out_{pre}|) \neq s$.

Then, given an arbitrary $K \in \Nat$, there exists $n \in \Nat$ such that for
the path $\pi_{pre}$ $.$ $(c)^n$,
$d(\out_{pre}.(\out_c)^n,\outt_{pre}.(\outt_c)^n) >
Kd(\inp_{pre}.(\inp_c)^n,\inpt_{pre}.(\inpt_c)^n)$.  Hence, $\tran$ is not
robust. 

\noindent{\em (ii) $\implies$ (iii) }: Follows directly form the definition of robustness.

\noindent{\em (iii) $\implies$ (i)}:
Assume towards contradiction that no cycle in  $\Pair_\tran$ satisfies (a) or (b) in condition 1.

Let $K_{cyc} = max_c (max (|\out_c|,|\outt_c|))$ be the maximum length of an
output word over all cycles $c: (q_c,(\inp_c,\inpt_c),(\out_c,\outt_c),q_c)$
in $\Pair$ with $d(\inp_c,\inpt_c) \neq 0$ \ie, $d(\inp_c,\inpt_c) \geq 1$. Then, for any such cycle,
$d(\out_c,\outt_c) \leq K_{cyc} \leq K_{cyc}d(\inp_c,\inpt_c)$. 

Let $K_{acyc} = (|Q_\Pair|-1)\ell_{max}$, where $Q_\Pair$ is the set of states
of $\Pair$ and $\ell_{max}$ is the length of the longest output word along
any transition of $\tran$. Then, for any acyclic fullpath $\pi: (q_{0_\Pair},
(\inp_\pi,\inpt_\pi),(\out_\pi,\outt_\pi),q_{F_\Pair})$ in $\Pair$, either
$d(\inp_\pi,\inpt_\pi) = 0$ and consequently, since $\tran$ is functional,
$d(\out_\pi,\outt_\pi) = 0$, or, $d(\inp_\pi,\inpt_\pi) \geq 1$ and
consequently, $d(\out_\pi,\outt_\pi) \leq K_{acyc} \leq
K_{acyc}d(\inp_\pi,\inpt_\pi)$.  Note that the maximum possible distance, $K_{acyc}$,
between the output words along an acyclic fullpath of $\Pair$, is exhibited
along an acyclic path of maximum length $|Q_\Pair|-1$, with the output word
pair along each transition being $\eps$ and a word of length $\ell_{max}$. 

Let $K = max(K_{cyc}, K_{acyc})$. We prove below that $\tran$ is $K$-robust.
Observe that $K_{cyc} \leq |\Pair|$, $K_{acyc} \leq |\Pair|\cdot |\Pair|$.
Thus, $K \leq K^{\tran}$ and if  $\tran$ is $K$-robust, it is $K^{\tran}$
robust.

For an arbitrary fullpath $\pi: (q_{0_\Pair},
(\inp_\pi,\inpt_\pi),(\out_\pi,\outt_\pi),q_{F_\Pair})$ of $\Pair$, 
we will prove by induction on the number of cycles in $\pi$ that
$d(\out_\pi,\outt_\pi) \leq Kd(\inp_\pi,\inpt_\pi)$. 

When the number of cycles is $0$, \ie, when $\pi$ is acyclic, it is obvious
from the definitions of $K_{acyc}$ and $K$ that the induction hypothesis holds.
Suppose the induction hypothesis holds for any fullpath with $N \geq 0$ cycles.
Consider a fullpath $(q_{0_\Pair},(\inp_{pre},\inpt_{pre}),
(\out_{pre},\outt_{pre}),q_c)$ $.$ $c$ $.$$(q_c,(\inp_{post},\inpt_{post}),
(\out_{post},\outt_{post}),q_{F_\Pair})$ with $N+1$ cycles, where cycle $c:
(q_c,(\inp_c,\inpt_c),(\out_c,\outt_c),q_c)$  is the $N+1^{th}$ cycle. From the
induction hypothesis, we have:
$d(\out_{pre}.\out_{post},\outt_{pre}.\outt_{post}) \leq
Kd(\inp_{pre}.\inp_{post},\inpt_{pre}.\inpt_{post})$.
  
If $d(\inp_c,\inpt_c) = 0$, then
$d(\inp_{pre}.\inp_c.\inp_{post},\inpt_{pre}.\inpt_c,\inpt_{post}) =
d(\inp_{pre}.\inp_{post},\inpt_{pre}.\inpt_{post})$, and since $c$ does not
satisfy any of the conditions in \lemref{suffnonrob},
$d(\out_{pre}.\out_c.\out_{post},\outt_{pre}.\outt_c,\outt_{post}) =
d(\out_{pre}.\out_{post},\outt_{pre}.\outt_{post})$.  Hence, from the induction
hypothesis, it follows that $d(\out_\pi,\outt_\pi) \leq
Kd(\inp_\pi,\inpt_\pi)$.

If $d(\inp_c,\inpt_c) \neq 0$, then 
we have:
\begin{align*}
d(\out_\pi,\outt_\pi) & \leq d(\out_{pre}.\out_{post},\outt_{pre}.\outt_{post}) + 
			     d(\out_c,\outt_c) \\ 
		      & \leq K(d(\inp_{pre}.\inp_{post},\inpt_{pre}.\inpt_{post})) + Kd(\inp_c,\inpt_c) \\
                      & = Kd(\inp_\pi,\inpt_\pi).
\end{align*} 	
The first inequality follows from the definition of the Manhattan distance.
The second inequality follows from the induction hypothesis, and the
definitions of $K_{cyc}$ and $K$.  The last equality follows from the
definition of the Manhattan distance and the facts ---  $|\inp_{pre}| =
|\inpt_{pre}|$, $|\inp_{c}| = |\inpt_{c}|$ and $|\inp_{post}| = |\inpt_{post}|$.
\end{proof}

\IfKrobustThenRobust*

\begin{proof} First, observe that for the generalized Manhattan distance $\hd$,
there are constants $c, C$ such that for all $\s,\str$, $c \cdot d_1(\s,\str)
\leq \hd(\s,\str) \leq C \cdot d_1(\s,\str)$, where $d_1$ is the Manhattan
distance.  Thus, $\tran$ is not robust w.r.t. generalized Manhattan distances
$\dI, \dO$ iff $\tran$ is not robust w.r.t. Manhattan distances on the input
and the output words.  Thus, it suffices to focus on robustness of $\tran$
w.r.t. Manhattan distances.

(1): It follows from \lemref{suffnonrob} that $\tran$ is not robust 
w.r.t. Manhattan distances iff there exists a cycle satisfying Condition
$1$ of \lemref{suffnonrob}. Existence of such a cycle can be checked in 
\np. Therefore, robustness of $\tran$ can be checked in \conp.

(2): This follows directly from \lemref{suffnonrob}.
\end{proof}

\RobustnessdNon*
\begin{proof} {\bf [of (i)]}
To show undecidability, it suffices to consider transducers processing finite words. 
We show a reduction of the universality problem for $\fsum$-automata, 
with weights $\{-1,0,1\}$ and threshold $1$, to robustness of non-deterministic 
letter-to-letter transducers w.r.t. the Hausdorff set-similarity function. 

Let us fix a $\fsum$-automaton $\aut$ with alphabet $\{a,b\}$ and weights
$\{-1,0,1\}$. Without loss of generality, we assume that for every word $\s \in
\{a,b\}^*$, $\aut$ has a run of the value $|\s|$, \ie, a run in which each
transition taken by $\aut$ has weight $1$.  Let $\Sigma = \{a,b\} \times 
\{\bot, \top\}$,  let $\Gamma = \{a,b\} \times \{\bot, \top\} \times
\{\lett{-1}, \lett{0}, \lett{1}\}$.  To refer conveniently to words over
$\Sigma$ or $\Gamma$ observe that for $\s \in \{a,b\}^*$,  $\theta \in
\{\bot,\top\}^*$, $\tau \in \{\lett{-1}, \lett{0}, \lett{1}\}^*$, if $|\s| =
|\theta| = |\tau|$, then $\s \otimes \theta \otimes \tau$ is a word over
$\Gamma$ and $\s \otimes \theta$ is a word over $\Sigma$. 

We define generalized Manhattan distances $\dI$, $\dO$
the Hausdorff set-similarity function $\dsetO$, and
a nondeterministic letter-to-letter transducer $\tran$ such that $\tran$ is $1$-robust iff for
every word $\s \in \{a,b\}^*$, $\valueL{\aut}(\s) < 1$.  
The transducer $\tran$ is defined as
follows: for all $\s \in \{a,b\}^*$ and $\theta \in \{\bot^*,\top^*\}$,
$\funcDefinedBy{\trans}(\s \otimes \theta) = \{\s \otimes \theta \otimes \tau : \tau$ is a sequence
of weights of some run of $\aut$ on $\s\}$.
We define the generalized \lnorm $\dI$ as follows: for all $i,j \in \{a,b\}$
and $P,Q \in \{\bot,\top\}$ we have 
\begin{compactenum}
\item $\diffI( \lpair{i}{P},\eos) = \diffI(\eos,\lpair{i}{P}) = \infty$  
\item $\diffI(\lpair{i}{P},\lpair{j}{Q}) = 
\begin{cases}
\infty & \text{if } i \neq j\\
1 & \text{if } i = j, P \neq Q\\
0 & \text{otherwise}
\end{cases}$
\end{compactenum}
We define the generalized \lnorm $\dO$ as follows:  for all $i,j \in \{a,b\}$, 
$P,Q \in \{\bot,\top\}$ and $c,e \in \{\lett{-1},\lett{0},\lett{1}\}$ we have 
\begin{compactenum}
\item $\diffO(\langle i,P,\lett{c} \rangle,\eos) = \diffO(\eos,\langle i,P,\lett{c} \rangle) = \infty$
\item $\diffO(\langle i,P,\lett{c} \rangle ,\langle j,Q,\lett{e} \rangle) = 
\begin{cases}
\infty & \text{if } i \neq j\\
c + e & \text{if } i = j, P \neq Q\\
0 & \text{if } i=j, P = Q, c = e \\
1 & \text{if } i=j, P = Q, c \neq e \\
\end{cases}$
\end{compactenum}

\noindent Observe that for each $\s,\str \in \{a,b\}^*$, $\s \otimes\theta \in
\dom{\tran}$ iff $\theta \in \{\bot^{|\s|},\top^{|\s|}\}$  and
$\dI(\s \otimes \theta, \str \otimes \vartheta) < \infty$ iff $\s=\str$ (and $\theta, \vartheta \in \{\bot^{|\s|},\top^{|\s|}\}$).  
Further note that
$\dI(\s \otimes \theta,\s \otimes \theta)$ $=$ $0$ $=$
$\dsetO^H(\funcDefinedBy{\tran}(\s \otimes \theta),\funcDefinedBy{\tran}(\s \otimes \theta))$.
Thus, we only consider pairs of inputs of the form $\s \otimes \theta$ and
$\s \otimes \vartheta$, where $\theta \neq \vartheta$. 
Then, $\dI(\s\otimes \theta, \s \otimes \vartheta)
= |\s|$.  Let us denote the set $\{\tau: \tau$ is a sequence of weights of some
run of $\aut$ on $\s \}$ as $W$. Thus,
$\dsetO^H(\funcDefinedBy{\tran}(\s \otimes \theta),\funcDefinedBy{\tran}(\s \otimes \vartheta))$
equals $\sup_{\tau \in W} \, \inf_{\upsilon \in W}
\dO(\s \otimes \theta \otimes \tau, \s \otimes \vartheta \otimes \upsilon  )$, which in turn equals $\sup_{\tau \in
W} \inf_{\upsilon \in W} \fsum(\tau) + \fsum(\upsilon)$. Since for every $\s \in \{a,b\}^*$, $\aut$
has a run of the value $|\s|$, $\sup_{\tau \in W} \inf_{\upsilon \in W} \fsum(\tau) +
\fsum(\upsilon)$ equals the sum of $|\s|$ and the value of an accepting run of
$\aut$ on $\s$.  Therefore, $\trans$ is $1$-robust iff for every word $\s$ in
$\{a,b\}^*$, $\aut$ has an accepting run on $\s$ of value $<$ $1$.
\end{proof}


\begin{proof} {\bf [of (ii)]}
As before, it suffices to consider transducers processing finite words. 
We show a reduction of the universality problem for $\fsum$-automata, 
with weights $\{-1,0,1\}$ and threshold $1$, to robustness of non-deterministic 
letter-to-letter transducers w.r.t. the Inf-inf set similarity function. 

Let us fix a $\fsum$-automaton $\aut$ with alphabet $\{a,b\}$ and weights
$\{-1,0,1\}$.  Without loss of generality, we assume that for every word $\s
\in \{a,b\}^*$, $\aut$ has a run of the value $|\s|$, \ie, a run in which each
transition taken by $\aut$ has weight $1$.  Let $\Sigma = \{ \bot, a,b\}$ and
$\Gamma = \{ \bot, \lett{-1}, \lett{0}, \lett{1}\}$.  Let $\dI$, $\dO$ be the
generalized Manhattan distances and $\dsetO$ be the Inf-inf set-similarity
function. We define a transducer $\tran$ such that $\tran$ is $1$-robust iff
for every word $\s \in \{a,b\}^*$, $\valueL{\aut}(\s) < 1$.  $\tran$ is defined
as follows: for every $k > 0$, $\funcDefinedBy{\trans}(\bot^k) = \{\bot^k\}$
and for every $\s \in \{a,b\}^*$, $\funcDefinedBy{\trans}(\s) = \{ \tau : \tau$
is a sequence of weights of some run of $\aut$ on $\s \}$.

Let $\dI$ be a generalized \lnorm defined as follows:
for all $i,j \in \{a,b\}$ we have
\begin{compactenum}
\item $\diffI(i,\eos) = \diffI(\eos,i) = \infty$ 
\item $\diffI(i,j) =
\begin{cases}
0 & \text{if } i = j\\
1 & \text{otherwise}
\end{cases}$
\end{compactenum}
Let $\dO$ be a generalized \lnorm defined as follows: for all $i,j \in
\{\lett{-1},\lett{0},\lett{1}\}$ we have 
\begin{compactenum}
\item $\diffI(i,\eos) = \diffI(\eos,i) = \infty$
\item $\diffI(\bot,\eos) = \diffI(\eos,\bot) = \infty$
\item $\diffO(i, j) =
\begin{cases} 
0 & \text{if } i = j\\
1 & \text{if } i \neq j\\
\end{cases}$
\item $\diffO(\bot, i) = i + 1$ 
\end{compactenum} 

Now, observe that for every $k >0$,
$\dsetO^{\inf}(\funcDefinedBy{\trans}(\bot^k), \funcDefinedBy{\trans}(\bot^k))
= 0 = \dI(\bot^k, \bot^k)$, and for all $\s,\str \in \{a,b\}^*$ with $|\s|  = |\str|$,
$\dsetO^{\inf}(\funcDefinedBy{\trans}(\s), \funcDefinedBy{\trans}(\str)) = 0  \leq
\dI(\s,\str)$. Indeed, $1^{|\s|} \in \funcDefinedBy{\trans}(\s), 1^{|\str|} \in
\funcDefinedBy{\trans}(\str)$ and $|\s|  = |\str|$. We now consider the case when $\s
\in \{a,b\}^*$ and $\str = \bot^{|\s|}$.  Then, $\dI(\s,\str) = |\s|$ and
$\dsetO^{\inf}(\funcDefinedBy{\trans}(\s), \funcDefinedBy{\trans}(\str))$ equals
the sum of $|\s|$ and the minimal value of an accepting run of $\aut$ on $\s$
(notice that for $i \in \{\lett{-1},\lett{0},\lett{1}\}$, $\diffO(\bot, i) = 1
+ i$). Therefore, $\trans$ is $1$-robust iff for every word $\s$ in $\{a,b\}^*$, 
$\aut$ has an accepting run on $\s$ of value $<$ $1$.
\end{proof}


\end{document}